\newcommand{\rvline}{\hspace*{-\arraycolsep}\vline\hspace*{-\arraycolsep}}
\newcommand{\R}{\mathbb{R}}
\newcommand{\N}{\mathbb{N}}
\newcommand{\I}{\mathbb{I}}
\newcommand{\bc}{\mathbf{c}}
\newcommand{\bx}{\mathbf{x}}
\newcommand{\bv}{\bm{v}}
\newcommand{\bw}{\bm{w}}
\newcommand{\be}{\mathbf{e}}
\newcommand{\cA}{\mathcal{A}}
\def\nn{\nonumber}
\theoremstyle{thmstyleone}%
\newtheorem{theorem}{Theorem}
\theoremstyle{thmstyletwo}%
\newtheorem{lemma}[theorem]{Lemma}
\newtheorem{counterexample}[theorem]{Counterexample}
\newtheorem{corollary}[theorem]{Corollary}
\theoremstyle{thmstylethree}%
\newtheorem{definition}{Definition}%
\begin{document}
	
	\title[PageRank and the control of a random walker]{Fully personalized PageRank and algebraic methods to distribute a random walker}
	
	
	\author*[1,2]{\fnm{Gonzalo} \sur{Contreras-Aso}}\email{gonzalo.contreras@urjc.es}
	
	\author[1,2,3]{\fnm{Regino} \sur{Criado}}\email{regino.criado@urjc.es}
	
	\author[1,2,3]{\fnm{Miguel} \sur{Romance}}\email{miguel.romance@urjc.es}
	
	\affil*[1]{\orgdiv{Departamento de Matem\'atica Aplicada, Ciencia e Ingenier\'ia de los Materiales y Tecnolog\'ia Electr\'onica}, \orgname{Universidad Rey Juan Carlos}, \orgaddress{\street{c/Tulip\'an s/n}, \city{M\'ostoles}, \postcode{28933}, \state{Madrid}, \country{Spain}}}
	
	\affil[2]{\orgdiv{Laboratory of Mathematical Computation on Complex Networks and their Applications}, \orgname{Universidad Rey Juan Carlos},  \orgaddress{\street{c/Tulip\'an s/n}, \city{M\'ostoles}, \postcode{28933}, \state{Madrid}, \country{Spain}}}
	
	\affil[3]{\orgdiv{Data, Complex networks and Cybersecurity Research Institute}, \orgname{Universidad Rey Juan Carlos}, \orgaddress{\street{Plaza Manuel Becerra 14}, \city{Madrid}, \postcode{28028}, \state{Madrid}, \country{Spain}}}
	
	
	\abstract{We present a comprehensive analysis of algebraic methods for controlling the stationary distribution of PageRank-like random walkers. Building upon existing literature, we compile and extend results regarding both structural control (through network modifications) and parametric control (through measure parameters) of these centralities. We characterize the conditions for complete control of centrality scores and the weaker notion of ranking control, establishing bounds for the required parameters. Our analysis includes classical PageRank alongside two generalizations: node-dependent dampings and node-dependent personalization vector, with the latter being a novel idea in the literature. We examine how their underlying random walk structures affect their controllability, and we also investigate the concepts of competitors and leaders in centrality rankings, providing insights into how parameter variations can influence node importance hierarchies. These results advance our understanding of the interplay between algebraic control and stochastic dynamics in network centrality measures.}
	
	\keywords{Graph theory, PageRank, random walker}
	
	
	
	\maketitle



\section{Introduction}\label{sec:introduction}

Network centrality measures have emerged as fundamental tools in the analysis of complex systems, providing crucial insights into node importance across diverse applications ranging from social network analysis to biological systems \cite{boccaletti2006complex, newman2010networks}. Among these measures, spectral centralities \cite{vigna2016spectral}- and particularly PageRank \cite{page1998pagerank} and its variants - hold special significance due to their deep connection to stochastic processes, while still retaining algebraic tractability. PageRank can be interpreted as the stationary distribution of a random walk on a graph, where a ``random surfer'' moves through the network following edges with probability $\alpha$ and teleports to random nodes with probability 1-$\alpha$. This interpretation through the lens of Markov processes provides both theoretical elegance and practical robustness to the measure \cite{langville2006google}.

The stochastic nature of PageRank raises intriguing questions about the controllability and manipulation of such measures. How much can we influence the outcome of a random process while maintaining its essential probabilistic character? This tension between deterministic control and stochastic behavior lies at the heart of our investigation. In this paper, we conduct a comprehensive study of the controllability of PageRank scores and rankings through both structural modifications of the underlying network and parametric adjustments of the centrality measures themselves.

Our analysis encompasses not only the classical PageRank algorithm but also two important variants, which introduce personalized rules for the random walk at each node: on the one hand, we consider the node-dependent damping discussed in \cite{avrachenkov2014personalized}, which allows for heterogeneous teleportation probabilities across the network; on the other hand we consider a novel notion of personalization in PageRank, considering a node-dependent personalization vector, something equivalent to using a ``personalization matrix''. These variants introduce additional degrees of freedom in the underlying stochastic process, potentially affecting the controllability of the resulting centrality measures. Through careful application of algebraic and analytical techniques, we establish bounds on parameter regions that enable various degrees of control over these measures. We differentiate between localization (the ranges of the possible centralities attainable for each node for a fixed damping factor) \cite{garcia2012localization}, complete control (the ability to achieve specific centrality scores for all nodes) and the weaker notion of ranking control (the ordered relationship between nodes) \cite{contrerasaso2023pagerank}.

This investigation not only advances our theoretical understanding of spectral centralities but also has practical implications for applications where influencing node importance is crucial, from search engine optimization \cite{chaffey2009search} to even  biology \cite{morrison2005generank,gleich2015pagerank}. By examining how the stochastic foundation of these measures interacts with our ability to control their outcomes, we provide insights into both the possibilities and limitations of centrality manipulation in complex networks.

This article is structured as follows: we begin with a review of some basic concepts of spectral centrality measures and their relation to Markovian processes in Section~\ref{sec:preliminary}. We then discuss some novel results of structural controllability in PageRank, mainly weight tuning and self-loop score increases, in Section~\ref{sec:structcontrol}. Section~\ref{sec:paramcontrol} provides an overview of some foundational results in the parametric controllability of PageRank (localization, complete and ranking control), which we then extend to some of its generalizations. We finish with some conclusions and outlook.

\paragraph{A note on \textit{Search Engine Optimization}}\label{subsec:SEO}
Something which we have not studied is the possibility of adding new links among existing nodes, however this is a very realistic possibility in human-made networks such as the Internet: a webpage can choose which hyperlinks it creates, bridging it to other webpages. As a matter of fact, this problem has been thoroughly studied by the Computer Science community, to the point where its practice has been dubbed \textit{Search Engine Optimization} or SEO \cite{langville2006google}, for short. 

Studies in this area exploded in popularity during the first decade of the 2000's, after PageRank made its appearance and thanks to it Google became the predominant search engine, and before Google moved on to more sophisticated and modified versions of the algorithm (with several link spam filters and using Machine Learning, for instance). Plenty of websites tried to attract as much attention as possible, and therefore needed to be highly positioned in Google's ranking. There were many attempts to mischievously exploit the algorithm via link farms or the so-called ``Google bombs'' \cite{bar2007google, langville2006google}, but there was also a vast and genuine attempt to understand thoroughly and quantitatively how did changes affect the PageRank centrality. For the interested reader, we refer them to \cite{avrachenkov2006effect, olsen2010maximizing, sheldon2010manipulation, fercoq2013ergodic, csaji2014pagerank}  and the references therein.

\section{Preliminary notions}\label{sec:preliminary}

Centrality measures are fundamental tools in network analysis, used to identify the most influential nodes in a graph. These measures quantify a node's importance based on its position and relationships within the network structure. Centrality can be interpreted in various ways, leading to diverse metrics such as degree centrality, and betweenness centrality, each capturing different aspects of influence.

Among these, spectral centrality measures leverage the eigenvalues and eigenvectors of matrices representing the network, most notably the adjacency matrix. Spectral centralities are particularly interesting for their mathematical tractability as well as for their computational efficiency, as compared to, for instance, the betweenness centrality.

In this section we are going to briefly summarize the most relevant spectral centralities in graphs, with a particular focus on PageRank and two of its variants, which can be understood through the lens of both linear algebra as well as stochastic processes.

\subsection{Spectral centrality measures in graphs}

Let $G=(V,E)$ be a graph with node set $V$ and edge set $E\subseteq V\times V$. We denote as $N=|V|$ the number of nodes, $L=|E|$ the number of edges. If for each edge $(i,j)\in E$ $(j,i)\in E$, then the graph is undirected, otherwise directed. We will often consider weighted graphs as well, i.e. we endow the graph with a function $w:E\rightarrow \R$. For reasons which will be clear soon, we will only consider positive weightings, $w(i,j)>0$. Lastly, we will at one point consider multigraphs as well: these can be thought of as weighted graphs with $w\in \N$.

Some graph theoretical concepts will be necessary in what follows. Some of those are quite well-known, we will briefly go over them now. We will denote as $k_i\in\N$ the (weighted) degree of node $i$, the number of neighbors. In the case of directed graphs each node $i$ will have both in-degree $k^{\rm in}_i$ and $k^{\rm out}_i$. The adjacency matrix $A=(a_{ij})\in\R^{N\times N}$ is defined as
\begin{equation}
	a_{ij} = \begin{cases}
		1 & \text{if } (i,j) \in E,\\
		0 & \text{otherwise.}
	\end{cases}
\end{equation}

The generalization to weighted graphs is straightforward. The adjacency matrix is intrinsically related to the connectivity of the graph: a (directed) graph is (strongly) connected, if and only if its adjacency matrix is irreducible 
\cite{meyer2001matrix}. This will play a key role when we define centralities based on the adjacency matrix.

A very active area of research in the last decades is the study of centralities in graphs \cite{saxena2020centrality}. A centrality measure $f:V\longrightarrow \R $ assigns a score to each node in the graph, quantifying their importance with respect to a give heuristic or metric. These scores have to be positive, and the overall assignment unique. Among all centrality measures, spectral ones \cite{vigna2016spectral} are particularly important due to their analytical properties and efficient computation, with PageRank \cite{page1998pagerank} being in the forefront due to its foundational role in the Google search engine \cite{langville2003deeper}.

\subsubsection{Perron-Frobenius theory}

In general when discussing spectral centralities of nodes in graphs, it is sensible to only consider strongly connected graphs: the influence of any node must be able to reach any other node. As we just mentioned, strongly connected graphs are represented by irreducible adjacency matrices. The Perron-Frobenius Theorem provides analytical guarantees for the existence of a unique, positive eigenvector of such matrices.

\begin{theorem}[Perron-Frobenius \cite{perron1907theorie, frobenius1912matrizen}] \label{thm:Perron-Frobenius}
	Let $A\in \R^{N\times N}$ be a non-negative, irreducible square matrix. Then the following statements are true:
	\begin{enumerate}
		\item $\rho(A)\in \sigma(A)$ and its algebraic multiplicity is 1.
		\item There exists a vector $\bc\in\R^n$ with $\bc>0$ such that it is an eigenvector of $A$ associated to the eigenvalue $\rho(A)$.
		\item The eigenvector $\bc$ is unique up to scaling.
	\end{enumerate}
	The eigenvector $\bc$ is sometimes referred to as the Perron eigenvector of $A$.    
\end{theorem}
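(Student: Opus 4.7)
The plan is to establish the theorem via the classical Collatz--Wielandt characterization, combined with the key observation that for an irreducible non-negative matrix $A$, the matrix $(I+A)^{N-1}$ is entrywise strictly positive. This positivity is itself a consequence of irreducibility: the $(i,j)$-entry of $A^k$ records the total weight of directed paths of length $k$ from $i$ to $j$, and irreducibility ensures that for every pair $(i,j)$ some such path of length at most $N-1$ exists, so the binomial expansion $(I+A)^{N-1}=\sum_{k=0}^{N-1}\binom{N-1}{k}A^k$ has a strictly positive entry in every position.

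For existence of a non-negative eigenvector at the spectral radius, I would work on the standard simplex $\Delta=\{x\in\R^N:\,x\geq 0,\;\sum_i x_i=1\}$ and define $r(x)=\min_{i:\,x_i>0}(Ax)_i/x_i$. Upper semicontinuity of $r$ on the compact set $\Delta$ guarantees that $\lambda^\star=\sup_{x\in\Delta} r(x)$ is attained at some $x^\star\in\Delta$, and a standard perturbation using the strict positivity of $(I+A)^{N-1}$ shows that the inequality $Ax^\star\geq \lambda^\star x^\star$ cannot be strict in any component, yielding the eigenrelation $Ax^\star=\lambda^\star x^\star$. The identification $\lambda^\star=\rho(A)$ then follows from the triangle-inequality bound $|\mu|\,|v|\leq A|v|$ for any eigenpair $(\mu,v)$, which forces $|\mu|\leq \lambda^\star$.

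Strict positivity of $\bc:=x^\star$ comes directly from the key observation above: if $\bc$ had a zero component, then $(I+A)^{N-1}\bc=(1+\lambda^\star)^{N-1}\bc$ would inherit that zero pattern, contradicting the entrywise positivity of $(I+A)^{N-1}$. Uniqueness up to scaling then follows by noting that if $\bc'$ were another Perron eigenvector linearly independent of $\bc$, one could choose a real scalar $t$ so that the combination $\bc-t\bc'$ is non-negative with at least one vanishing component, and apply the same argument to derive a contradiction; this already settles the second and third items, as well as geometric multiplicity $1$ in the first.

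The most delicate part --- and the one I expect to be the main obstacle --- is upgrading geometric multiplicity $1$ to \emph{algebraic} multiplicity $1$. For this I would apply the result just proved to $A^\top$ (itself non-negative and irreducible) to obtain a strictly positive left Perron eigenvector $\bw>0$ with $\bw^\top A=\rho(A)\,\bw^\top$. If $\rho(A)$ had algebraic multiplicity at least $2$, there would exist a generalized eigenvector $v$ satisfying $(A-\rho(A)I)v=\bc$; pairing on the left with $\bw^\top$ then yields $0=\bw^\top(A-\rho(A)I)v=\bw^\top\bc>0$, a contradiction, which closes the final item.
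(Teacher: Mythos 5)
The paper does not prove this theorem at all: it is stated as a classical result and attributed to Perron and Frobenius, so there is no in-paper argument to compare against. Your proposal is the standard Wielandt (Collatz--Wielandt) proof, and it is essentially correct and complete in outline. The key ingredients are all in place: the entrywise positivity of $(I+A)^{N-1}$ via the path-counting interpretation of powers of $A$; the maximization of $r(x)=\min_{i:\,x_i>0}(Ax)_i/x_i$ over the simplex (your upper-semicontinuity claim does hold, since the support of $x$ can only grow under small perturbations, so $r$ is a minimum over a locally shrinking index set of locally continuous functions); the perturbation argument that multiplies the slack $Ax^\star-\lambda^\star x^\star\geq 0$ by $(I+A)^{N-1}$ to contradict maximality; the positivity and geometric simplicity arguments; and the left-eigenvector pairing $\bw^\top\bc>0$ that rules out a Jordan block of size $\geq 2$, which is the cleanest known route to \emph{algebraic} simplicity. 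Two minor points you should make explicit in a full write-up: in the uniqueness step you must first reduce to a real eigenvector $y$ (the eigenspace of the real eigenvalue $\rho(A)$ of a real matrix admits a real basis) and replace $y$ by $-y$ if it has no positive component before choosing $t=\min\{\bc_i/y_i:\,y_i>0\}$; and you should observe that $\lambda^\star>0$ (e.g.\ from $r(\be/N)=\min_i\sum_j a_{ij}>0$, as irreducibility forbids zero rows for $N\geq 2$) so that the spectral radius is genuinely positive. Neither is a gap, just bookkeeping.
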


It is important to note that there is also the preceding Perron Theorem \cite{perron1907theorie}, which is analogous and applies to positive square matrices without the need to consider irreducibility. This is relevant in the case of PageRank (where the addition of the personalization part renders the Google matrix positive).

We are now equipped to discuss spectral centrality measures in graphs.

\subsubsection{The eigenvector centrality} 

The quintessential spectral centrality comes from considering that not only the amount of acquaintances matter (as in the degree centrality) but also the importance of each of them. Formalizing this idea for the importance $c(i)$ of a node \cite{bonacich1972technique}, we have
\begin{equation}\label{eq:eigcent_derivation}
	c(i) \propto \sum_{j\rightarrow i} c(j) \quad \Rightarrow\quad  \lambda \bc = A^T \bc,
\end{equation}
where we have introduced the proportionality constant $\lambda$, and the adjacency matrix $A = (a_{ij})$ to sum over neighbors. This is recognized as the eigenvector equation of the transposed adjacency matrix.

If a graph is (strongly) connected, then its adjacency matrix is irreducible, and due to the Perron-Frobenius Theorem \cite{perron1907theorie} there is a unique, positive eigenvector $\bc>0$, which is associated to the spectral radius $\rho(A)$. Said eigenvector fulfills the requirements of a centrality measure (being positive and unique), hence it is referred to as the eigenvector centrality of the graph.

\subsubsection{PageRank} 

In 1998, Sergei Brin and Larry Page proposed an algorithm to rank webpages \cite{page1998pagerank}, later implemented as the base of their newly created search engine, Google. Their idea was to formulate a mathematical model for a typical Internet surfer as a stochastic process on a network: namely, they consider a Markov chain (i.e. a random walker) with the following transition rules: starting at node $i$ 
\begin{itemize}
	\item with probability $\alpha\in[0,1]$, it traverses one of the out-edges. The probability of choosing an out-edge is proportional to its weight.
	
	\item With probability $1-\alpha$, it jumps elsewhere at random, with the choice being dictated by the probability distribution $\bv\in\R^N$.
\end{itemize}
This stochastic process would be iterated \textit{ad infinitum}. This was supposed to model the behavior of the average Internet surfer, following hyperlinks until at some point they would go to a different webpage altogether. 

The resulting stationary distribution (how many times each node is visited in the limit $t\rightarrow\infty$) would be later recognized by the network science community as a centrality measure intimately related to the eigenvector centrality, but for a modified version of the original adjacency matrix. This stationary distribution is used at Google as a proxy for the importance of each webpage \cite{page1998pagerank}. 

The PageRank centrality of a graph is the unique, positive, left eigenvector $\bm{\pi}(\alpha,\bm{v})\in\R^N$ of the so-called ``Google matrix'' $G$
\begin{equation}\label{eq:PageRank}
	G(\alpha,\bv) = \alpha P + (1-\alpha) \be \cdot \bv^T, \qquad p_{ij}=\frac{a_{ij}}{\sum_j a_{ij}},
\end{equation}
where $P=(p_{ij})$ is the row-normalized adjacency matrix and $\be=(1,\dots,1)^T$. The two parameters $\alpha\in[0,1]$ (damping factor), and $\bv\in\R^N,\, |v|_1=1$ (personalization vector) must be set beforehand, and their interplay will be studied later in this manuscript.

There are some facts which are important to note about this measure: first, the Perron-Frobenius Theorem \cite{perron1907theorie} again guarantees the existence and uniqueness of the PageRank vector\footnote{If there are ``dangling nodes'', i.e. nodes with zero out-degree, these are problematic from an irreducibility point of view. This is often solved by substituting $P\rightarrow P + \mathbf{d} \cdot \bm{u}^T$, where $\bm{d}=(d_i)\in\R^N$ is the vector indicating the dangling nodes ($d_i=1$ if $i$ is a dangling node, otherwise $d_i = 0$) and $\bm{u}\in\R^N,\,||\bm{u}||_1$ is the dangling node distribution.}, and second, this vector is unit norm, $|\pi|_1 =\bm{\pi}^T \be = 1$, by construction.

Using this last fact we can obtain an explicit formula for the PageRank vector \cite{boldi2005pagerank}
\begin{equation}\label{eq:PRexplicit}
	\bm{\pi}^T = (1-\alpha)\bv^T(1-\alpha P)^{-1}= (1-\alpha)\bv^T \sum_{n=0}^\infty \alpha^n P^n,
\end{equation}
where the last equality defines essentially the power method computation of the PageRank vector, using the personalization vector as the seed of the power iteration. This is behind the usefulness of PageRank, as the power method is simple to implement in practice due to its low computational complexity.

\paragraph{Node-dependent restart PageRank} 

In \cite{avrachenkov2014personalized} a different generalization of PageRank was put forward, where they consider node-dependent dampings $\alpha_i\in(0,1),\, i\in 1,\dots,N$. This generalizes the stochastic process to the situation where the chance of following the links or teleporting elsewhere depends on where the random walker is located at present in the graph.

The node-dependent restart PageRank vector $\bm{\pi}_{\rm NPR}(\alpha_1, ..., \alpha_N, \bw)\in \R^{N}$ is defined as the positive, leading eigenvector $\bm{\pi}_{\rm NPR}\in\mathbb{R}^N$ satisfying
\begin{equation}\label{eq:NDRPageRank}
	\bm{\pi}_{\rm NPR}^T \left[\mathcal{A} P + (\I_N-\cA) \be \cdot  \bw^T\right] = \bm{\pi}_{\rm NPR}^T,\quad \cA = \text{diag}(\alpha_1, ..., \alpha_N),
\end{equation}
where $\cA=\text{diag}(\alpha_1, ..., \alpha_N) \in \R^{N\times N}$ is the diagonal matrix containing each damping and $\bw\in\R^N$ is again a personalization vector $\bw>0,||\bw||_1=1$.

The introduction of this node-dependent damping $\alpha_i$ enables the discussion of different possibilities and cases of interest, like the usual PageRank ($\alpha_i=\alpha,\,\forall i$) or the degree dependent restart $\cA = \I_N - a D^\sigma$ where $D$ is the (out)-degree matrix and $a, \sigma$ are tunable parameters. This choice is realistic from an Internet surfer point of view (choosing to follow links or jump elsewhere based on the amount of links available), which is why in the original paper \cite{avrachenkov2014personalized} they entertain this possibility.

\paragraph{Personalization matrix PageRank}

A matter that has been always overlooked in the literature is the possibility to include a personalization matrix instead of a personalization vector. In the original PageRank, the teleportation part of the random walk is dictated by the matrix $\be\cdot \bv^T\in\R^{N\times N}$. This row-stochastic matrix has identical rows, which can be interpreted as the random walker finding the same probability to teleport to other nodes regardless of where it teleports from. 

Following similar heuristics as the one from the node-dependent damping, we can consider a ``node-dependent personalization vector'', where rows need not be identical. We end up with the following expression for the personalization matrix PageRank $\bm{\pi}_{\rm MPR}(\alpha,M)\in \R^{N}$
\begin{equation}\label{eq:PMPageRank}
	\bm{\pi}_{\rm MPR}^T \left[\alpha P + (1-\alpha)  M\right] = \bm{\pi}_{\rm MPR}^T,
\end{equation}
where $M=(m_{ij})\in\R^{N\times N}$ is a row-stochastic matrix. The component $m_{ij}$ indicates the probability of teleporting to node $j$, once the random walker located at node $i$ has chosen to teleport. 

To the author's knowledge, this generalization of PageRank has never been discussed in the literature, even though it is rather sensible. One possible explanation is the lack of an explicit formula such as \eqref{eq:PRexplicit}. Another reason will be discussed in Subsection~\ref{subsec:persmatrix}.

One could also consider a completely personalized PageRank by the node-dependent damping and the node-dependent personalization. For the sake of conciseness we will not discuss this, as discussing the two cases separately will be enough for our purposes.

\section{Structural controllability}\label{sec:structcontrol}

In the previous section we introduced the PageRank centrality measure, which can be interpreted as a stochastic process on a network, as well as some related centralities. This stochastic processes clearly depend on the underlying network properties: the simplest example of this is the connectivity of the network, a random walker in a disconnected graph will never traverse it entirely. It is therefore important to understand how much do changes in the structure of the graph affect the centrality outcome, i.e., how ``structurally controllable'' these measures are.


\subsection{Weight tuning control}\label{subsec:weighttune}

One of the simplest structural changes one can consider is altering the weight of existing edges in the graph. If we start from a strongly connected graph, as long as we maintain the positivity of the edges, we should still be able to use the analytical guarantees from the Perron-Frobenius Theorem \ref{thm:Perron-Frobenius}.

Following this research idea, in \cite{latora2012controlling} the authors carried out a complete characterization of the relation between changing the weights of a subset of edges and changing the centrality score of each node.

The most important result in the aforementioned paper is the following theorem:

\begin{theorem}\label{thm:weightcontrol}
	Let $G=(V, E)$ be a directed, possibly weighted, strongly connected graph with $N=|V|$ nodes, let $\mathbf{c}\in \R^N$ be a positive vector. It is always possible to assign weights $w_{ij}$ to edges $(i,j)\in E$ such that the eigenvector centrality of $G$ is $\mathbf{c}$.
\end{theorem}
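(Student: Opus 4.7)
The crux of my approach is a decoupling observation: the weight $w_{ji}$ of an edge $(j,i)\in E$ appears in the $i$-th component of the eigenvector equation $\lambda\mathbf{c} = A^T\mathbf{c}$ and in no other. Writing that equation component-wise yields, for every $i\in V$,
\[
    \sum_{j:\,(j,i)\in E} w_{ji}\, c_j \;=\; \lambda c_i,
\]
so the problem factorizes into $N$ independent scalar linear equations, each one involving a disjoint set of weight variables. I would therefore proceed constructively: first fix any positive scalar $\lambda>0$ (its value only rescales the Perron eigenvector and is otherwise irrelevant, since centrality is defined up to scaling), and then, for each node $i$, select any positive weights $w_{ji}>0$ on the in-edges of $i$ satisfying the displayed equation --- for instance the uniform choice $w_{ji}=\lambda c_i/(k_i^{\rm in}\, c_j)$. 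Strong connectivity guarantees $k_i^{\rm in}\geq 1$, and the positivity of $\lambda$, $c_i$, $c_j$ ensures that the set of strictly positive solutions is non-empty.

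By construction, the resulting weighted adjacency matrix $A$ satisfies $A^T\mathbf{c}=\lambda\mathbf{c}$ with $\mathbf{c}>0$. Since we have merely reweighted existing edges with strictly positive weights, $A$ remains non-negative and irreducible (the underlying digraph is unchanged and still strongly connected). Applying Theorem~\ref{thm:Perron-Frobenius} to $A^T$ --- which shares both the spectrum and the irreducibility of $A$ --- then forces $\mathbf{c}$ to be its Perron eigenvector, with $\lambda=\rho(A^T)=\rho(A)$. This is exactly the defining property \eqref{eq:eigcent_derivation} of the eigenvector centrality of the reweighted graph, completing the proof.

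The only real obstacle one might anticipate is a perceived global coupling between the $N$ component equations, since \emph{a priori} a single weight variable could be forced to play a role in several of them. The decoupling observation rules this out immediately: each weight $w_{ji}$ is assigned to a unique target node $i$ and thus appears in a unique equation. Everything else reduces to distributing a positive quantity among a non-empty collection of positive coefficients, which is always possible and allows considerable freedom in the actual choice of weights.
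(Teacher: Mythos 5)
Your proof is correct: the decoupling of the eigenvector equation into $N$ independent constraints on the in-edge weights of each node, the explicit positive solution $w_{ji}=\lambda c_i/(k_i^{\rm in}c_j)$, and the appeal to Perron--Frobenius (a positive eigenvector of an irreducible non-negative matrix is necessarily the Perron eigenvector) together give a complete argument. Note that the paper itself states this theorem without proof, citing \cite{latora2012controlling}; your construction is essentially the same as the one in that reference, so there is nothing to flag beyond the trivial caveat that every node of a strongly connected graph on $N\geq 2$ nodes indeed has $k_i^{\rm in}\geq 1$, which you already use.
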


This is a very powerful result: no matter the internal connectivity of a directed network, as long as it is strongly connected we can always find some weighting of its edges such that the eigenvector centrality of the weighted graph is one of our choosing. An example of this can be seen in Figure~\ref{fig:fullcontrol}.

\begin{figure}[h]
	\centering
	\includegraphics[width=0.8\textwidth]{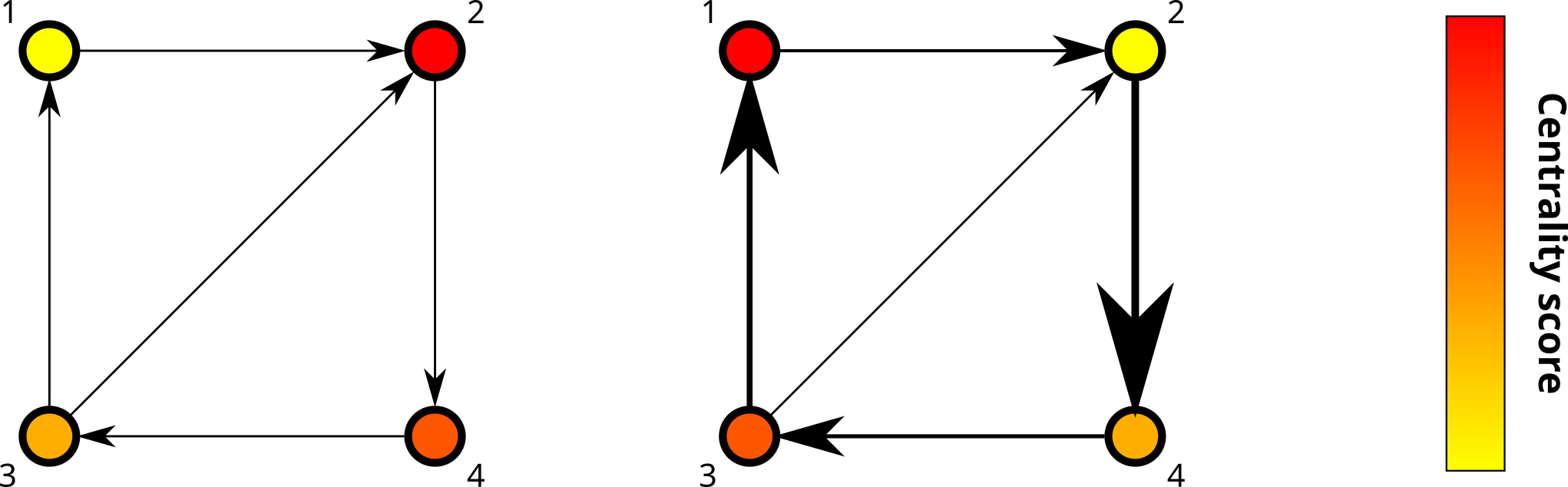}
	\caption[Toy example of a graph whose centrality has been modified via weight changes.]{A toy example of a directed, strongly connected graph with 4 nodes and 5 edges, whose centrality ranking has been modified to be the opposite of the original one via weight changes, represented as thicker arrows.}
	\label{fig:fullcontrol}
\end{figure}

Perhaps more interestingly then, is the question of finding the minimum subset of edges whose weights need to be changed for any desired ranking to be achieved. This question was also addressed in \cite{latora2012controlling}, but rather than considering the edges whose weights need to be changed, they considered a related problem, which is finding the Minimum Controlling Subset (MCS) of nodes, whose edges need to have different weights. Surprisingly, they found that in many cases of real networks, the relative size of the MCS compared to the size of the full network is quite small, around 10\% of it or even less.

It should be noted that for this method to work it is essential that the graph be directed: undirected graphs have the same weight in both directions, and that restriction of the degrees of freedom generally renders this method useless. As a simple example of this, consider two nodes joined by an undirected, weighted edge: their centrality score is the same regardless of the weight.

\subsubsection{Weight tuning in PageRank and its variants}

The deep relation between the eigenvector centrality and PageRank, mediated by the Perron-Frobenius Theorem \ref{thm:Perron-Frobenius} make it quite tantalizing the possibility of applying the same method which worked in the directed graph case to PageRank. Sadly, as we will clearly see there is a fundamental problem in this scenario, which renders PageRank generally uncontrollable using this method even in the directed graph case.

The reason for this is the row-normalization of the adjacency matrix: the construction of $P$ normalizes out any weight placed on out-edges coming from nodes with out-degree equal to 1. The simplest way to see this is considering directed rings, as in the following counterexample.

\begin{counterexample}
	Consider the $N=6$ directed cycle $G=(V,E)$ from Figure~\ref{fig:ring-graph}. Its adjacency matrix is of the form $a_{ij}=\delta_{j,(i+1 \mod 6)}$. Each node clearly has $k_i^{\rm in} = k_i^{\rm out} = 1,\, \forall i$. If we modify the weight of edge $(i,j)\in E$ to be $w_{ij}>0$, then $k_i^{out}=w_{ij}$ and it therefore translates in the row-adjacency matrix $P$ of PageRank back to $p_{ij}=1$, regardless of $w_{ij}$. 
	
	\begin{figure}[h]
		\centering
		\includegraphics[width=0.6\textwidth]{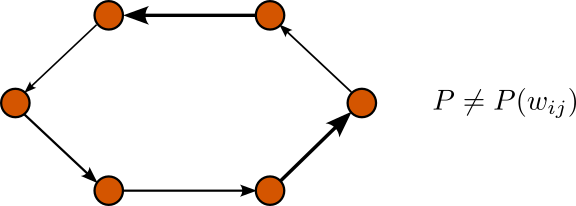}
		\caption[Simple example of a network whose PageRank centrality is unaffected by any modification of the edge weights]{Simple example of a network (the directed cycle $C_6$) whose PageRank centrality is unaffected by any modification of the edge weights.}
		\label{fig:ring-graph}
	\end{figure}
\end{counterexample}

This counterexample applies to all PageRank variants introduced in the previous section: it is an impossibility coming from the row-stochasticity of $P$, which is present in all equations.

\subsection{Self-loop tuning in PageRank}

Another structural change which we could consider is the addition of a single self-loop over a node, with arbitrary weight $w$, in the hopes that said node will become the most central. From the stochastic process point of view, this seems rather sensible: the random walker will likely choose to traverse the self-loop over and over again, thus increasing said node's PageRank score.

This is, however, more nuanced: increasing the self-loop weight will essentially diminish the weights of other edges, coming out of that node, but the row-normalization will not allow its weight to be much larger than that of the other edges in the graph. Moreover, for low values of $\alpha$ the walker will likely not choose to traverse any edge (hence, ignoring the self loop), but to jump at random to another node based on the personalization vector $\bv$. Nevertheless, we can establish a bound for the values of $\alpha$ which admit a PageRank centrality where the node with a self-loop has the highest score.

Before doing so, let us enunciate the following important result
\begin{theorem}(Theorem 8, p. 130 in \cite{lax2007linear}) \label{thm:lax}
	Let $A(t)$ be a differentiable matrix-valued function of $t$, $a(t)$ an eigenvalue of $A(t)$ of multiplicity one. Then we can choose an eigenvector $\bm{h}(t)$ of $A(t)$ pertaining to the eigenvalue $a(t)$ to depend differentiably on~$t$.
\end{theorem}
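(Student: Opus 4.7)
The plan is to prove the statement locally at an arbitrary $t_0$ via the implicit function theorem and then patch the local pieces together on the interval of definition. The immediate obstacle to applying the implicit function theorem directly to $(A(t) - a(t) I)\bm{h} = 0$ is that this matrix is singular (that is exactly why $a(t)$ is an eigenvalue), so I would augment the system by treating $a$ as an additional unknown and adding a scalar normalization condition to produce a square system.

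Fix $t_0$ and let $\bm{h}_0$ be a right eigenvector and $\bm{\ell}_0$ a left eigenvector of $A(t_0)$ for the eigenvalue $a(t_0)$, scaled so that $\bm{\ell}_0^T \bm{h}_0 = 1$. The hypothesis that $a(t_0)$ has algebraic multiplicity one is used twice here: once to guarantee one-dimensional left and right eigenspaces, and once to guarantee $\bm{\ell}_0^T \bm{h}_0 \neq 0$, which would fail precisely if $a(t_0)$ carried a nontrivial Jordan block. I would then consider $F : \R \times \R^N \times \R \to \R^{N+1}$ given by
\[
F(t, \bm{h}, \mu) = \bigl( A(t)\bm{h} - \mu \bm{h},\; \bm{\ell}_0^T \bm{h} - 1 \bigr),
\]
which vanishes at $(t_0, \bm{h}_0, a(t_0))$, and inspect the partial Jacobian
\[
J = \begin{pmatrix} A(t_0) - a(t_0) I & -\bm{h}_0 \\ \bm{\ell}_0^T & 0 \end{pmatrix}.
\]

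The technical heart of the argument is showing that $J$ is invertible, and this is the step I expect to be the main obstacle. I would argue it by contradiction: if $J(\bm{v}, \mu)^T = 0$, then left-multiplying the top block by $\bm{\ell}_0^T$ annihilates $A(t_0) - a(t_0) I$ and forces $\mu = 0$; the top block then places $\bm{v}$ in the one-dimensional kernel $\mathrm{span}(\bm{h}_0)$, after which the bottom row $\bm{\ell}_0^T \bm{v} = 0$ forces the scalar to vanish. Once $J$ is invertible, the implicit function theorem yields differentiable $\bm{h}(t)$ and $\mu(t)$ near $t_0$ with $F(t, \bm{h}(t), \mu(t)) = 0$; since $\mu(t)$ is a continuous eigenvalue of $A(t)$ agreeing with $a(t_0)$ at $t_0$ and $a$ is the unique simple eigenvalue of $A(t)$ in a small neighborhood of $a(t_0)$, we conclude $\mu(t) = a(t)$, so $\bm{h}(t)$ is the desired differentiable eigenvector.

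Extending from local to global on an interval is straightforward, since any two local choices differ by a nonvanishing scalar that itself varies differentiably (by the simplicity of the eigenvalue and the smooth dependence of any fixed linear normalization). A cleaner alternative, worth noting as a sanity check, is to define the Riesz spectral projection $P(t) = \frac{1}{2\pi i} \oint_\Gamma (zI - A(t))^{-1}\,\d z$ around a small loop $\Gamma$ enclosing only $a(t_0)$, and set $\bm{h}(t) = P(t)\bv$ for any $\bv$ with $P(t_0)\bv \neq 0$; the differentiability of $P(t)$ then follows directly from the resolvent identity, bypassing the explicit Jacobian computation.
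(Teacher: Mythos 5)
The paper does not actually prove this statement: it is imported verbatim as Theorem 8, p.~130 of Lax's \emph{Linear Algebra and Its Applications}, so there is no in-paper proof to compare against. Your argument is a correct, self-contained proof by a standard route: the bordered (augmented) system $F(t,\bm{h},\mu)=0$ with the normalization $\bm{\ell}_0^T\bm{h}=1$, invertibility of the bordered Jacobian via the non-orthogonality $\bm{\ell}_0^T\bm{h}_0\neq 0$ of the left and right eigenvectors of a simple eigenvalue, and the implicit function theorem. The invertibility computation is right (left-multiplying by $\bm{\ell}_0^T$ kills $\mu$, simplicity pins $\bm{v}$ to the eigenline, and the normalization row kills the remaining scalar), and your Riesz-projection alternative $\bm{h}(t)=P(t)\bm{v}$ is equally valid and arguably cleaner for the global statement. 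Two points deserve one more line each in a careful write-up: the identification $\mu(t)=a(t)$ needs the standard fact that a simple eigenvalue is isolated and varies continuously, so that $\mu(t)$ and $a(t)$ coincide on a neighborhood of $t_0$; and the local-to-global patching relies on the domain being an interval (on a non-simply-connected parameter space a global continuous choice can fail), which matches the intended setting here. Neither is a gap in substance.
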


This will be a key result in the upcoming theorem, which shows that a node can increase its centrality score beyond that of the rest of the nodes after increasing its self-weight and the damping factor high enough.

\begin{theorem}[Self-loop PageRank increase]
	Let $G=(V,E)$ be a strongly connected graph, with row-normalized adjacency matrix $P$. If we allow a node to establish an arbitrary number of self-loops onto itself (alternatively, a single self-loop with arbitrarily high weight), then for $1-1/w \leq \alpha \leq 1$ that node can always achieve the highest PageRank value. 
\end{theorem}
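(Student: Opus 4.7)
The plan is to exploit the observation that as the self-loop weight $w$ grows, the $k$-th row of the modified row-normalized matrix $P'$ approaches $\be_k^T$. Explicitly, if $d_k$ denotes the original weighted out-degree of $k$, then $p'_{kk}=w/(w+d_k)$, $p'_{kj}=a_{kj}/(w+d_k)$ for $j\neq k$, while all other rows coincide with those of $P$. In the limit $w\to\infty$, $k$ becomes an absorbing state of the pure random walk governed by $P'$, and the walker can only escape via a teleportation event.

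First, I would isolate the PageRank balance equation at $k$:
\begin{equation*}
\pi_k\bigl(1-\alpha p'_{kk}\bigr)=\alpha\sum_{j\neq k}\pi_j p'_{jk}+(1-\alpha)v_k.
\end{equation*}
Plugging in $p'_{kk}=w/(w+d_k)$ yields $1-\alpha p'_{kk}=(w(1-\alpha)+d_k)/(w+d_k)$, which at $\alpha=1-1/w$ reduces to $(1+d_k)/(w+d_k)$, of the same order $O(1/w)$ as the off-diagonal entries $p'_{kj}=a_{kj}/(w+d_k)$; this matching of scales is the algebraic source of the threshold $1-1/w$. Complementarily, in the limit $w\to\infty$ the inflow into $l\neq k$ from $k$ vanishes, so $\tilde{\bm{\pi}}=(\pi_l)_{l\neq k}$ decouples from $\pi_k$ and satisfies $\tilde{\bm{\pi}}^T=(1-\alpha)\tilde{\bv}^T(I-\alpha\tilde P)^{-1}$, where $\tilde P$ is the principal submatrix obtained by removing row and column $k$. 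Strong connectivity forces at least one row of $\tilde P$ to sum strictly below $1$, so $\rho(\tilde P)<1$ and $(I-\alpha\tilde P)^{-1}$ stays bounded as $\alpha\to 1$; hence $\tilde{\bm{\pi}}=O(1-\alpha)$ and $\pi_k\to 1$ in this limit, making $\pi_k$ strictly the largest entry.

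Next, I would upgrade this qualitative limit to the quantitative bound via Theorem~\ref{thm:lax}: the Google matrix $G'(\alpha,w)$ is positive and irreducible with a simple Perron eigenvalue, so $\bm{\pi}(\alpha,w)$ depends differentiably on both parameters, and the continuous function $\pi_k-\max_{l\neq k}\pi_l$ preserves its strict positivity throughout the region $\{(\alpha,w)\colon 1-1/w\leq\alpha\leq 1\}$ once it has been established on the boundary where $w$ is large and $\alpha$ near $1$.

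I expect the main obstacle to lie precisely in this last step. The naive lower bound $\pi_k\geq(1-\alpha)v_k/(1-\alpha p'_{kk})$ only controls $\pi_k$ in terms of $v_k$, which may be arbitrarily small, so one must keep the contribution $T=\sum_{j\neq k}\pi_j p'_{jk}$ in the numerator by expressing it through $(I-\alpha\tilde P)^{-1}$ and bounding its operator norm uniformly over $\alpha\in[1-1/w,1]$. Pairing this sharper lower bound on $\pi_k$ with the corresponding upper bound on each $\pi_l$ derived from the decoupled reduced system is, in my view, the delicate part that actually produces the quoted $1-1/w$ threshold rather than some weaker estimate.
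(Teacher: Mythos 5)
Your proposal is correct and follows essentially the same route as the paper: the limiting block structure as $w\to\infty$ (with the self-loop node becoming absorbing), the spectral radius of the complementary block being strictly below $1$ by strong connectivity, Lax's continuity theorem for the Perron eigenvector, and the observation that for $\alpha \geq 1-1/w$ the teleportation term is $\mathcal{O}(1/w)$ and hence subleading. The quantitative gap you flag at the end --- upgrading the qualitative limit to a uniform statement over the whole region $1-1/w \leq \alpha \leq 1$ --- is equally present in the paper's own proof, which itself concedes that the condition on $\alpha$ is only ``necessary (but not sufficient)''; your scale-matching reading of the balance equation at $\alpha=1-1/w$ actually makes the origin of that threshold more transparent than the paper's argument.
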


\begin{proof}
	Without loss of generality, we choose node 1 to be that with a tunable self-loop weight. For simplicity, we first consider the $\alpha=1$ (no personalization) case, we later discuss what changes when $\alpha \neq 0$.

	In that scenario, we need to compute the Perron eigenvector of $P$. The row normalization yields the following values
	\begin{equation}
		P_{11}=\frac{w}{k_1^{out}+w}, \quad P_{1j} = \frac{a_{1j}}{k_1^{out}+w}
		\quad \Rightarrow \quad P(w)=
		{
			\renewcommand*{\arraystretch}{2}
			\begin{pmatrix}
				\dfrac{w}{k_1^{out} + w} & \rvline & \dfrac{a_{1j}}{k_1^{out}+w} \\
				\hline
				\mathcal{O}(1) & \rvline & \mathcal{O}(1)
			\end{pmatrix},
		}
	\end{equation}
	where $w$ is the weight of the self-loop.
	
	Due to the strong connectedness, for finite $w$ we have that $P(w)$ is an irreducible matrix, therefore it has a unique, positive eigenvector $\bm{\pi}$ associated to the spectral radius $\lambda=1$. 
	
	In the limit we have
	\begin{equation}
		P_\infty=\lim_{w\rightarrow\infty} P(w) \Rightarrow 
		P_\infty^T = 
		{
			\renewcommand*{\arraystretch}{1.5}
			\begin{pmatrix}
				1 & \rvline & \mathcal{O}(1) \\
				\hline
				\bm{0} & \rvline & \mathcal{O}(1)
			\end{pmatrix}\equiv
			\begin{pmatrix}
				1 & \rvline & \bc \\
				\hline
				\bm{0} & \rvline & B
			\end{pmatrix}, 
		}
	\end{equation}
	where the respective submatrices are
	\begin{equation}
		\bm{0}=
		(0,\dots, 0)^T
		\in\R^{(N-1)\times 1}, \quad \bc\in\R^{1\times(N-1)}, \quad B\in\R^{(N-1)\times(N-1)}.
	\end{equation}
	
	$P_\infty^T$ possesses $\be_1=(1,0,\dots,0)^T$ as eigenvector with eigenvalue $\lambda=1$. The remaining eigenvectors are orthogonal to this one, and their non-trivial part comes from those of the submatrix $B=(b_{ij})\in\R^{(N-1)\times(N-1)}$. This submatrix is irreducible, and thus the Perron-Frobenius Theorem \ref{thm:Perron-Frobenius} applies to it, in particular we have the following property for its spectral radius $r_B$ \cite{meyer2001matrix}
	\begin{equation}
		r_B \leq \max_i\sum_{j} b_{ij},
	\end{equation}
	and due to the fact that $B$ is no longer column stochastic (as was $P_\infty^T$), we have $r_B<1$. Therefore, $\lambda = 1$ is an eigenvalue of $P(w)$ with algebraic multiplicity 1 for all $w\in(0,\infty)$. Furthermore, $P(w)$ is differentiable in that range, hence by Theorem~\ref{thm:lax} its associated eigenvector is continuous, and therefore the only non-negligible component of the eigenvector for high enough, finite $w$ will be that of the first node.
	
	Lastly, we should consider the effect of decreasing $\alpha$ to less than 1. In that case a necessary (but not sufficient) condition for the first node to still retain the highest importance would be if $\alpha \in (1 - 1/w, 1)$ for $w$ large enough. To see this, notice that the Google matrix \eqref{eq:PageRank} becomes for the minimum $\alpha$
	\begin{equation}
		\mathbb{G} = \alpha P + (1-\alpha)\, \be \cdot \bv^T = \alpha P + \mathcal{O}(w^{-1}) \be \cdot \bv^T. 
	\end{equation}
	Therefore, the leading order contribution is still that coming from $P$, and the previous arguments still apply. 
\end{proof}

The two extreme cases are rather sensible from the random walker perspective: if $w\rightarrow 1$ then $\alpha\rightarrow 0$, in which case the centrality is set by $\bv$, which can indeed dictate that node 1 is the most central (the walker can teleport to it more frequently); if $w\rightarrow\infty$ then $\alpha\rightarrow 1$, in which case as soon as the walker reaches the node, since it no longer teleports it will be forced to use the self-loop over and over again.

Note that the above theorem is still valid in the case of the matrix personalization PageRank \eqref{eq:PMPageRank}, since its modification only affects the part of the Google matrix which is subleading in $w$. The node-dependent restart \eqref{eq:NDRPageRank} generalization presents more challenges in this situation when we move away from trivial parameters (i.e. $\alpha_i\rightarrow1,\,\forall i$, respectively).

\section{Parametric changes}\label{sec:paramcontrol}

When it comes to the controllability of centrality measures, there is sometimes the possibility of modifying them via the parameters which may be involved in it, if any. Clearly, this is not always the case: for instance, the eigenvector centrality is parameter-free. 

On the contrary, PageRank contains not one, but two parameters: the damping factor $\alpha\in[0,1]$ and the personalization vector $\bv\in\R$, $||\bv||_1=1$. We will devote this section to understanding the relation between the centrality outcomes of PageRank (as well as other PageRank-related measures) and the parameters involved, to see how do the latter influence the former.

In order to ease the notation and statements which are to come, we will always be assuming graphs $G=(V,E)$ without dangling nodes (i.e. nodes with zero out-degree, $k_i^{\rm out}=0$), for sensibility of the PageRank measure, something we already commented in Section~\ref{sec:preliminary}.

\subsection{Localization of PageRank, competitors and leaders} \label{subsec:PRlocalization}

A first step towards the characterization of the relation between centralities and parameters in the PageRank centrality measure can be found in \cite{garcia2012localization}, where they associate each node in a network to an interval in the real line which symbolizes the possible values of its centrality score for a fixed damping factor. These intervals are then used to extract information about leaders, followers and competitors in the network. We now briefly review the main points of said article, which we will generalize in Subsection~\ref{subsec:paramRelatedcontrol} for other PageRank-related measures.

First of all, note that for a fixed graph $G$ and damping factor $\alpha\in(0,1)$ the possible centrality vectors $\bm{\pi}$ depend on the choice of personalization vector $\bm{v}$, i.e. $\bm{\pi}=\bm{\pi}(\bm{v})$. 

It is natural to then consider the following definition:
\begin{definition}\label{def:localizationPageRank}
	Let $G=(V,E)$ be a graph with $|V|=N$. The localization of PageRank for node $i$ is the set of all possible values of the PageRank centrality for said node,
	\begin{equation}
		\mathcal{PR}(i)=\{\bm{\pi}^T(\bm{v}) \bm{e}_i \; \forall \bm{v}\in\R^N, \bm{v}>0, ||\bm{v}||_1=1 \},
	\end{equation}
\end{definition}

where $e_i$ is the unit vector in the $i$'th direction. In \cite{garcia2012localization}, we find the following theorem.
\begin{theorem}\label{thm:localizationPageRank}
	Given a graph $G=(V,E)$ and a fixed damping factor $\alpha\in (0,1)$, for each node $i\in \N$
	\begin{equation}
		\mathcal{PR}(i) = (\min_j x_{ji} , x_{ii}),
	\end{equation}
	where $X = (x_{ij}) = (1 - \alpha)(\I_N - \alpha P)^{-1} \in \R^{N\times N}$.
\end{theorem}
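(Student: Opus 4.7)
The strategy is to exploit the explicit formula \eqref{eq:PRexplicit} for PageRank in order to rewrite the problem as one about convex combinations of the columns of $X$. Observe that \eqref{eq:PRexplicit} can be stated compactly as $\bm{\pi}^T = \bv^T X$, so that
\begin{equation*}
\bm{\pi}^T \be_i = \sum_{j=1}^N v_j \, x_{ji}.
\end{equation*}
Since $\bv$ ranges over the open probability simplex (strictly positive entries summing to $1$), the first step is to recognize $\pi_i$ as a strict convex combination of the $N$ entries of the $i$\textsuperscript{th} column of $X$, and hence identify $\mathcal{PR}(i)$ with the relative interior of the convex hull of $\{x_{1i},\dots,x_{Ni}\}$, i.e. the open interval $(\min_j x_{ji},\,\max_j x_{ji})$. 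This requires only basic convexity: taking $\bv$ close to a vertex $\be_{j_{\min}}$ (respectively $\be_{j_{\max}}$) of the simplex realizes scores arbitrarily close to the extremes without attaining them.

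The substantive step is then to prove the identification $\max_j x_{ji} = x_{ii}$, which is what gives the theorem its asymmetric form. My plan is to use the Neumann series
\begin{equation*}
X = (1-\alpha)\sum_{n=0}^\infty \alpha^n P^n,
\end{equation*}
so that $x_{ji}/(1-\alpha)$ is the expected discounted number of visits to $i$ by a random walker on $P$ started at $j$. Intuitively, a walker started at $i$ is already at $i$ at time zero and may return, whereas a walker started at $j\neq i$ must first reach $i$ before accumulating visits there. The rigorous implementation is a first-passage decomposition: for $j\neq i$ and $n\geq 1$,
\begin{equation*}
(P^n)_{ji} \;=\; \sum_{k=1}^{n} f_{ji}^{(k)}\, (P^{n-k})_{ii},
\end{equation*}
where $f_{ji}^{(k)}$ is the probability of first visit to $i$ at time $k$. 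Multiplying by $\alpha^n$, summing, and using $(P^0)_{ji}=0$ while $(P^0)_{ii}=1$, a Cauchy-product manipulation yields
\begin{equation*}
\sum_{n\geq 0} \alpha^n (P^n)_{ji} \;=\; \Bigl(\sum_{k\geq 1} \alpha^k f_{ji}^{(k)}\Bigr) \sum_{m\geq 0} \alpha^m (P^m)_{ii}.
\end{equation*}
Since $\sum_{k\geq 1} \alpha^k f_{ji}^{(k)} \leq 1$, this gives $x_{ji}\leq x_{ii}$ for every $j\neq i$, hence $\max_j x_{ji}=x_{ii}$.

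Combining the two parts produces $\mathcal{PR}(i) = (\min_j x_{ji},\, x_{ii})$ as claimed. I expect the main obstacle to be the column-maximum identification: the convex-combination observation is immediate, but showing $x_{ii}\geq x_{ji}$ requires the probabilistic first-passage argument above (a purely algebraic derivation from the matrix $(\I_N-\alpha P)^{-1}$ is less transparent, though one could alternatively invoke general properties of diagonally dominant M-matrices to obtain the same inequality). A final sanity check is that strong connectedness plus $\alpha\in(0,1)$ guarantees $X>0$, so all entries are meaningful and the interval is nondegenerate whenever not all column entries coincide.
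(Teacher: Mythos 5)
Your proposal is correct, but for the key inequality it takes a genuinely different route from the one used in this paper. The convex-combination part (writing $\pi_i=\sum_j v_j x_{ji}$ via \eqref{eq:PRexplicit} and identifying $\mathcal{PR}(i)$ with the relative interior of the hull of the $i$th column of $X$) matches what is done here for the node-dependent generalization, where the endpoints are approached with explicit near-vertex families $\bw^\epsilon_\lambda$. The difference is in proving $\max_j x_{ji}=x_{ii}$: you use the Neumann series together with a first-passage decomposition $(P^n)_{ji}=\sum_{k=1}^n f_{ji}^{(k)}(P^{n-k})_{ii}$ and a Cauchy product, whereas the argument followed in this paper (Lemma~\ref{lem:diagonaldom-NPR}, which specializes to the present theorem when all $\alpha_i=\alpha$) is purely algebraic: $\I_N-\alpha P$ is strictly row-diagonally dominant, so by Theorem 2.5.12 of \cite{horn1991topics} its inverse is strictly diagonally dominant of its column entries, and the $M$-matrix property lets one drop the absolute values. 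Your probabilistic route is self-contained (no matrix-analysis citation needed), yields the sharper quantitative bound $x_{ji}\leq \alpha\, x_{ii}$ for $j\neq i$ — which in particular gives the \emph{strict} inequality needed for the right endpoint to be excluded, a point you should state explicitly rather than the ``$\leq$'' you wrote — and carries the random-walk interpretation of $x_{ji}$ as a discounted visit count. The algebraic route is shorter given the cited results and transfers verbatim to node-dependent dampings $\cA=\mathrm{diag}(\alpha_1,\dots,\alpha_N)$, which is why it is the one adopted here; your first-passage argument would also generalize, but the weights $\alpha_{j_1}\alpha_{j_2}\cdots$ along paths no longer factor as a clean power series in a single $\alpha$, so the Cauchy-product step needs reworking.
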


This is a rather simple but powerful result: one computes matrix $X$ from the row-normalized adjacency matrix of the graph and the damping factor, and its components constrain the possible values of all centrality scores in the graph, regardless of the personalization vector. This information can be used to analyze the existence of ``effective competitors'' in the network.

It is worth mentioning that the localization of PageRank values has recently been generalized to the case of temporal networks in \cite{aleja2024timedependent}.


\subsection{PageRank rankings and the personalization vector}\label{subsec:paramPRcontrol}

While the previous results are focused at individual centrality scores, a natural step forward is the analysis of the \textit{entire} centrality outcomes as functions of the parameters. This is something we tackled in \cite{contrerasaso2023pagerank}, which we now proceed to discuss.

In order to answer this problem we derived bounds relating the damping factor and the personalization vector for the complete control problem. They are rather strict, which is why we then relax the problem to that of ranking control, where we can again obtain some bounds which are softer but still strict.

\subsubsection{Complete control}

The proper tuning of both the damping parameter $\alpha$ and personalization vector $\bv$ is essential for achieving the desired PageRank centrality in a given network (as discussed in \cite{boldi2005pagerank, garcia2012localization}). Our work in \cite{contrerasaso2023pagerank} established the relationship between these parameters, specifically examining which ranges of $\alpha$ yield the target centrality vector when combined with appropriate choices of $\bv$. This raises a fundamental question: given a graph and damping factor, can we achieve any desired PageRank vector by selecting the right personalization vector?

The answer proves to be negative, as there exist cases where no positive solution ($v_i>0,, \forall i$) can be found. However, we can investigate the conditions that determine whether $\bm{\pi}_0$ corresponds to some personalization vector $\bm{v}$. The following result characterizes when positive personalization vectors exist to produce a specified PageRank centrality $\bm{\pi}_0$.

\begin{theorem}[Existence of the personalization vector \cite{contrerasaso2023pagerank}]\label{prop-existencev}
	Given a graph $G$ and a positive, unit norm vector $\bm{\pi}_0$ then there exists a positive, unit norm personalization vector $\bm{v}$ such that $\bm{\pi}_0$ is the PageRank vector if and only if $\bm{\pi}_0^T \bm{e}_j > \alpha \bm{\pi}_0^T P \bm{e}_j$ for all $j$.
\end{theorem}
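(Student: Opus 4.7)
The plan is to directly invert the PageRank fixed-point equation and then read off both the positivity and the normalization conditions on $\bm{v}$. Starting from the defining relation $\bm{\pi}_0^T G(\alpha,\bm{v}) = \bm{\pi}_0^T$ and expanding using \eqref{eq:PageRank}, I would exploit the normalization $\bm{\pi}_0^T \be = 1$ to collapse the rank-one term $\be \cdot \bm{v}^T$ into just $\bm{v}^T$, obtaining
\begin{equation*}
\alpha \bm{\pi}_0^T P + (1-\alpha) \bm{v}^T = \bm{\pi}_0^T,
\end{equation*}
which I would solve algebraically for $\bm{v}^T = \frac{1}{1-\alpha}\bigl(\bm{\pi}_0^T - \alpha \bm{\pi}_0^T P\bigr)$. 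This immediately shows that the personalization vector, if it exists, is \emph{uniquely} determined by $\bm{\pi}_0$, $\alpha$, and $P$, which is what makes the equivalence rather than merely an implication possible.

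For the forward direction, I would assume $\bm{v}>0$ is a valid personalization vector producing $\bm{\pi}_0$, and then the explicit formula forces $\bm{\pi}_0^T \bm{e}_j - \alpha \bm{\pi}_0^T P \bm{e}_j = (1-\alpha) v_j > 0$ componentwise, which is exactly the stated inequality. For the converse, given the inequality $\bm{\pi}_0^T \bm{e}_j > \alpha \bm{\pi}_0^T P \bm{e}_j$ for all $j$, I would \emph{define} $\bm{v}$ via the same explicit formula, and the inequality guarantees entrywise positivity.

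The only remaining item is to check that the candidate $\bm{v}$ automatically satisfies $\|\bm{v}\|_1 = 1$. This follows cleanly from the row-stochasticity of $P$: since $P\be = \be$, one has $\bm{\pi}_0^T P \be = \bm{\pi}_0^T \be = 1$, so
\begin{equation*}
\bm{v}^T \be = \frac{1}{1-\alpha}\bigl(\bm{\pi}_0^T \be - \alpha \bm{\pi}_0^T P \be\bigr) = \frac{1-\alpha}{1-\alpha} = 1.
\end{equation*}
Substituting $\bm{v}$ back into $\alpha P + (1-\alpha)\be \cdot \bm{v}^T$ and multiplying by $\bm{\pi}_0^T$ from the left confirms that $\bm{\pi}_0$ is indeed the PageRank vector of $G(\alpha,\bm{v})$, completing the equivalence.

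Honestly, no step here is a serious obstacle: the argument is essentially a one-line inversion of an affine equation, and the main subtlety — the one worth flagging explicitly in the write-up — is remembering to use $\bm{\pi}_0^T \be = 1$ twice (once to eliminate the outer product $\be \cdot \bm{v}^T$, once to check that $\bm{v}$ is automatically a probability vector). The mild conceptual point is that unit-norm of $\bm{v}$ is not an extra hypothesis one has to impose, but a consequence of $P$ being row-stochastic together with $\bm{\pi}_0$ being unit-norm; this is what makes the characterization as clean as a single inequality per node.
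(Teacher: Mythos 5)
Your proof is correct and follows essentially the same route the paper takes: the theorem is quoted from \cite{contrerasaso2023pagerank}, and the paper's own proof of the analogous node-dependent statement (Theorem~\ref{thm:prop-existencev-ndpr}) proceeds exactly as you do, inverting the fixed-point equation to get an explicit $\bm{v}$, then checking unit norm via row-stochasticity of $P$ and positivity via the stated inequality. Your added remarks on the uniqueness of $\bm{v}$ and on unit-norm being automatic rather than an extra hypothesis are accurate and consistent with the paper's treatment.
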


This result makes two points clear about the required damping factor: first, it will generally need to be small, if we expect those $N$ inequalities to be fulfilled at the same time. Second, the larger the network, the smaller it will need to be, not only for the amount of inequalities but also because of the increased size of $P$.

\subsubsection{The ranking control problem}\label{sec:ranking}

The previous conditions on the damping factor make it clear that, for real networks (whose size tends to be large), there will be absolutely no room for complete centrality control. However, for most applications the individual centrality scores are not relevant. Instead, what is relevant is the relative positioning of nodes in the overall ranking. A ranking among the nodes $V$ is a partial order $\preceq$ on $V$ such that for any $i,j\in V$, $i\preceq j$ means that $j$ is ranked higher than or equal to $i$.

When it comes to discussing controllability, this is a far less restrictive setting (the ``amount'' of centrality vectors is uncountably infinite, while the amount of rankings is finite). In order to study this problem we resorted to a change in viewpoint to a geometric one \cite{contrerasaso2023pagerank}, based on the fact that: consider the $N$-simplex defined as we can understand equation \eqref{eq:PRexplicit} as the following map
\begin{align}\label{eq-PRmap}
	\bm{\pi}(G, \alpha, \cdot):&\quad  \Delta_N \longrightarrow \Delta_N \nonumber \\
	&\quad\,\, \bm{v} \lhook\joinrel\longrightarrow \bm{\pi}(G, \alpha, \bm{v}),
\end{align}
where $\Delta_N = \left\{\bm{x} \in \R^N  \; | \; \bm{x}>0,\, ||\bm{x}||_1=1 \right\}$ is the positive $N$-simplex. This is because this is the space of all possible personalization and PageRank vectors of graphs with $N$ nodes. Since all rankings can be found around the center of the simplex, there is ranking control if and only if
\begin{equation}\label{eq-rankingcenter}
	\bm{e}_0= \frac{1}{N}\bm{e} \in \textrm{Im}(\bm{\pi}) \quad \text{and}\quad \bm{e}_0 = \frac{1}{N} \bm{e} \notin \partial \textrm{Im}(\bm{\pi}).
\end{equation}

Imposing that a personalization vector must exist such that this vector is in the image of PageRank, and utilizing Theorem~\ref{prop-existencev}, we arrive at the following theorem.

\begin{theorem}[Characterization of ranking control \cite{contrerasaso2023pagerank}]\label{thm:ranking}
	Given a graph $G$ and damping factor $\alpha=(0,1)$, then it is possible to obtain any ranking of the nodes under the PageRank if and only if
	\begin{equation}\label{eq-necessary}
		\frac{1}{\alpha} > \max_j\left(\sum_{i=1}^N P_{ij}\right).
	\end{equation}
\end{theorem}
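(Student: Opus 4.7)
The plan is to exploit the geometric reformulation already set up in equation~\eqref{eq-rankingcenter}: ranking control holds if and only if the barycenter $\bm{e}_0=\frac{1}{N}\bm{e}$ of the simplex $\Delta_N$ lies in the interior of the image of the PageRank map. The reason this characterization suffices is that the map $\bm v\mapsto \bm\pi(G,\alpha,\bm v)$ is continuous and affine in $\bm v$ (from \eqref{eq:PRexplicit}), so a neighborhood of $\bm e_0$ in the image translates into every possible strict ranking of the coordinates being attained by some admissible $\bm v$. Hence the problem reduces to characterizing when $\bm e_0$ is the PageRank vector for some strictly positive personalization.

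Next, I would apply Theorem~\ref{prop-existencev} with the specific choice $\bm\pi_0=\bm e_0=\tfrac{1}{N}\bm e$. The theorem tells us that a strictly positive unit-norm personalization producing $\bm\pi_0$ exists if and only if $\bm\pi_0^T\bm e_j>\alpha\,\bm\pi_0^T P\bm e_j$ for every $j$. Plugging in $\bm\pi_0=\frac{1}{N}\bm e$, the left-hand side becomes $\tfrac{1}{N}$ and the right-hand side becomes $\tfrac{\alpha}{N}\sum_{i=1}^N P_{ij}$, so cancelling $\tfrac{1}{N}$ yields
\begin{equation}
1>\alpha\sum_{i=1}^N P_{ij}\qquad\text{for all }j,
\end{equation}
which, taking the worst column, is exactly $\frac{1}{\alpha}>\max_j\sum_{i=1}^N P_{ij}$. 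That gives one direction of the equivalence.

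For the converse, I would argue that if the inequality of Theorem~\ref{thm:ranking} holds, then not only is $\bm e_0$ in the image, but it lies in its interior. The strictness of the column inequalities gives a whole open neighborhood of personalizations $\bm v$ whose PageRank vectors perturb $\bm e_0$ in arbitrary directions within $\Delta_N$; combined with the continuity of $\bm\pi(G,\alpha,\cdot)$, this realizes every ranking. Conversely, if $\frac{1}{\alpha}\leq\max_j\sum_i P_{ij}$, then even $\bm e_0$ itself cannot be produced by any strictly positive $\bm v$ (by the only-if direction of Theorem~\ref{prop-existencev}), so in particular the map cannot achieve PageRank vectors lying on both sides of every hyperplane $\pi_i=\pi_j$, and some ranking is unreachable.

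The step I expect to be the main obstacle is the interior argument in the sufficiency direction, that is, promoting the existence of a single $\bm v$ hitting $\bm e_0$ to the existence of $\bm v$'s realizing every ranking. The cleanest way to handle this is to note that the map \eqref{eq-PRmap} is an affine bijection of $\Delta_N$ onto its image when $\alpha<1$ (since $(\I_N-\alpha P)^{-1}$ is invertible), so the image is a full-dimensional convex subset of $\Delta_N$; then the strict inequalities in Theorem~\ref{prop-existencev} certify that $\bm e_0$ is not on the boundary of this subset, matching the second condition of \eqref{eq-rankingcenter} and closing the argument.
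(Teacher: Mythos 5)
Your proposal is correct and follows essentially the same route the paper takes: reduce ranking control to the geometric condition \eqref{eq-rankingcenter} that the barycenter $\bm{e}_0=\frac{1}{N}\bm{e}$ lies in the interior of $\mathrm{Im}(\bm{\pi})$, then apply Theorem~\ref{prop-existencev} with $\bm{\pi}_0=\bm{e}_0$ to turn the $N$ strict column inequalities into the single bound $\frac{1}{\alpha}>\max_j\sum_{i=1}^N P_{ij}$. Your added remark that the affine bijectivity of $\bm{v}\mapsto(1-\alpha)\bm{v}^T(\I_N-\alpha P)^{-1}$ plus the openness of the strict inequalities certifies the interior condition is exactly the detail the paper's sketch leaves implicit.
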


Notice the reduction from $N$ inequalities (Theorem~\ref{prop-existencev}) to just one. This renders ranking control far more feasible than complete control. However, this result also fixes an upper bound for $\alpha$ based on the maximum total probability of visits to any node $j$. When it is fulfilled, any ranking is achievable through proper selection of the personalization vector. This is a deterministic constraint - even a single node with many low out-degree incoming connections severely limits ranking control, regardless of the broader network structure. This limitation particularly affects scale-free networks~\cite{barabasi1999emergence}, which characteristically contain such high in-degree nodes.

\paragraph{The bound on real networks}

Having established a network-specific upper bound for $\alpha$ that enables ranking control via the personalization vector, we determined its strictness \cite{contrerasaso2023pagerank}. While $\alpha = 0.85$ is standard \cite{langville2006google}, representing roughly 8 hyperlink clicks before reset, this implies a maximum column sum of $P$ around 1.17, quite a restrictive condition.

We analyzed various networks from the KONECT \cite{KONECT-datasets} and CASOS \cite{CASOS-datasets} repositories\footnote{For dangling nodes, we added one random connection to a non-dangling node as a minimal intervention.} to compute their maximum permissible $\alpha$ values for ranking control, shown in Figure \ref{fig:real-datasets-alpha}.

\begin{figure*}[ht!]
	\centering
	\includegraphics[width=0.9\textwidth]{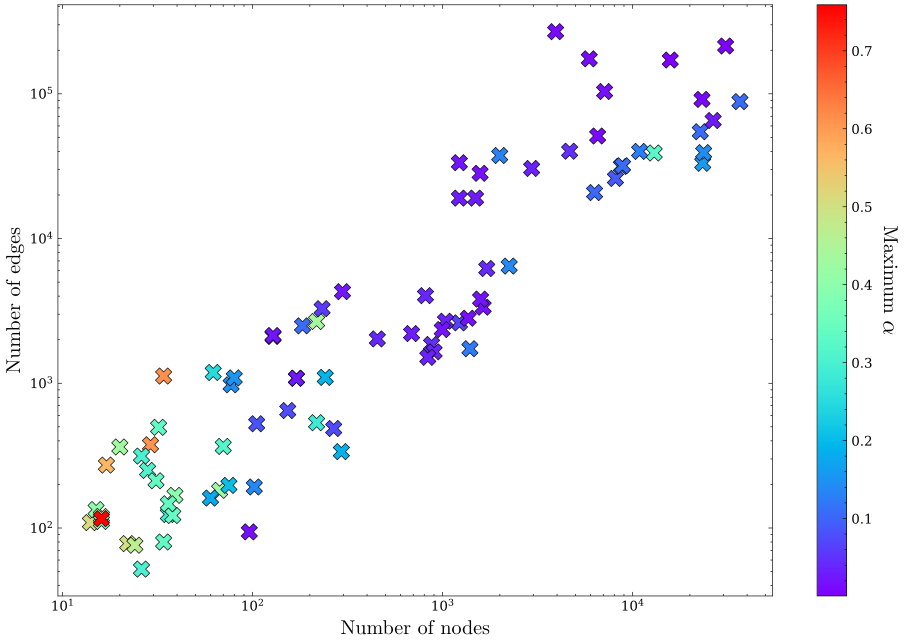}
	\caption[Scatter plot showing the number of edges against the number of nodes for real networks and their maximum $\alpha$ for ranking control.]{Scatter plot showing the number of edges against the number of nodes for 84 different real networks obtained from the KONECT network repository \cite{KONECT-datasets} and the CASOS network repository \cite{CASOS-datasets}, with datapoints colored based on the maximum value of $\alpha$ providing ranking control. Figure reproduced from \cite{contrerasaso2023pagerank} with permission.}
	\label{fig:real-datasets-alpha}
\end{figure*}

The results show maximum $\alpha$ values consistently below the standard 0.85, with smaller networks offering better controllability. This aligns with Theorem~\ref{thm:ranking}, as larger networks tend to have higher maximum column sums in $P$, reducing controllability.

These findings have significant implications for network analysis and information flow control. The demonstrated limitations on PageRank controllability, particularly in large-scale networks, challenge the conventional use of $\alpha = 0.85$ when specific ranking outcomes are desired. This is especially relevant for applications in search engine optimization, recommendation systems, and influence maximization strategies, where practitioners often seek to adjust node centralities. The results suggest that achieving desired ranking patterns may require substantially lower damping factors than traditionally used, particularly in large networks or those with heavy-tailed degree distributions. This creates a practical trade-off between ranking controllability and the local exploration of network structure that higher damping factors provide.

\subsection{Fully personalized PageRank: controlling node-dependent random walks}\label{subsec:paramRelatedcontrol}

As we briefly touched upon in Section~\ref{sec:preliminary}, there are two interesting variants of PageRank, which consider node-dependent dampings as well as node-dependent personalization vectors. It would be interesting to understand if the analysis carried out in the previous subsections can be extended to them, and if so, what changes.

Here we will briefly examine those two variants, to find analogous bounds which can then be used to judge how controllable they are, in the same spirit as in the previous Subsections, comparing them to the classical PageRank case.

\subsubsection{Ranking control of node-dependent restart PageRank}

We end this section examining how the rankings of the node-dependent restart PageRank, introduced in Section~\ref{sec:preliminary}, are affected by the choice of parameters, as in the previous cases (standard and biplex PageRank). The measure is again equipped with similar ingredients as the standard PageRank, as there is a collection of dampings $\alpha_i \in (0,1),\,i=1,\dots,N$ involved, as well as a personalization vector $\bw\in\R^N$.

\paragraph{Interlude: realization of node-dependent restart PageRank via standard PageRank}

Before analyzing the relation between these and the centralities, it is worth understanding how are they tied when compared to PageRank. In particular, we want to quantify how redundant is this measure as opposed to the vanilla PageRank, in the sense of achieving the same outcomes. This is relevant for control purposes, as we want to know how much more versatility is provided by this generalization.

\begin{theorem}
	Let $G=(V,E)$ be a graph, $\alpha_i\in(0,1),\, i\in V$ node-dependent dampings and $\bm{w}\in \mathbb{R}^N,\, \bm{w}>0,\,||\bm{w}||_1=1$ a personalization vector in the node-dependent restart PageRank $\bm{\pi}_{\rm NPR}$. There always exists a personalization vector $\bv\in\R^N,\,\bv>0,\,||\bv||_1=1$ such that the standard PageRank coincides with the node-dependent restart PageRank $\bm{\pi}_{\rm PR}=\bm{\pi}_{\rm NPR}$ if 
	\begin{equation}
		\bw_\cA^T \be_i > \alpha \bw_\cA^T P \be_i \quad  \forall i\in V,
	\end{equation}
	where $\bw_\cA^T=\bw^T(\I_N-\cA P)^{-1}$.    
\end{theorem}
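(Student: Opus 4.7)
The plan is to reduce the statement to Theorem~\ref{prop-existencev} by first obtaining a closed-form expression for $\bm{\pi}_{\rm NPR}$ in terms of $\bw$ and $\cA P$, and then recognizing that asking for $\bm{\pi}_{\rm PR} = \bm{\pi}_{\rm NPR}$ is exactly the existence-of-personalization problem for the standard PageRank vector $\bm{\pi}_0 := \bm{\pi}_{\rm NPR}$.

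First, I would rewrite the defining equation \eqref{eq:NDRPageRank} of the node-dependent restart PageRank as
\begin{equation}
\bm{\pi}_{\rm NPR}^T (\I_N - \cA P) = \bigl(\bm{\pi}_{\rm NPR}^T (\I_N - \cA)\be\bigr)\,\bw^T.
\end{equation}
Since each $\alpha_i \in (0,1)$ and $P$ is row-stochastic, $\rho(\cA P) < 1$, so $\I_N - \cA P$ is invertible and $(\I_N - \cA P)^{-1} = \sum_{n\ge 0}(\cA P)^n$ is componentwise nonnegative. The scalar $s := \bm{\pi}_{\rm NPR}^T(\I_N - \cA)\be$ is strictly positive (as $\bm{\pi}_{\rm NPR}>0$ and $\I_N-\cA$ is a positive diagonal matrix), and therefore
\begin{equation}
\bm{\pi}_{\rm NPR}^T = s\,\bw^T(\I_N - \cA P)^{-1} = s\,\bw_\cA^T,
\end{equation}
with $s$ fixed by the normalization $\bm{\pi}_{\rm NPR}^T\be = 1$, i.e.\ $s = 1/(\bw_\cA^T\be)$. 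In particular $\bw_\cA>0$ and is a positive scalar multiple of a probability vector.

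Second, I would apply Theorem~\ref{prop-existencev} with the target centrality $\bm{\pi}_0 := \bm{\pi}_{\rm NPR}$. That theorem guarantees the existence of a positive, unit-norm $\bv$ such that $\bm{\pi}_{\rm PR}(\alpha,\bv) = \bm{\pi}_0$ if and only if
\begin{equation}
\bm{\pi}_0^T \be_j > \alpha\,\bm{\pi}_0^T P\,\be_j \quad \forall j\in V.
\end{equation}
Substituting $\bm{\pi}_0^T = s\,\bw_\cA^T$ and cancelling the common positive factor $s$ converts this condition, verbatim, into the hypothesis
\begin{equation}
\bw_\cA^T \be_j > \alpha\,\bw_\cA^T P\,\be_j \quad \forall j\in V,
\end{equation}
which proves the theorem.

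The substantive step is really the algebraic inversion that produces the identity $\bm{\pi}_{\rm NPR}^T = s\,\bw_\cA^T$; once that is in hand, the argument is purely a reduction to Theorem~\ref{prop-existencev}. The only subtlety I would be careful about is book-keeping the positivity and finiteness of $s$ (so that cancellation is legitimate) and the invertibility of $\I_N - \cA P$, both of which follow from $\alpha_i<1$ and row-stochasticity of $P$. I do not expect any genuine obstacle, since the condition in the conclusion has been engineered to be the image under the substitution $\bm{\pi}_0 \mapsto s\bw_\cA$ of the condition of the earlier characterization.
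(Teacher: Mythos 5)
Your proof is correct and follows essentially the same route as the paper: both derive the closed form $\bm{\pi}_{\rm NPR}^T \propto \bw^T(\I_N-\cA P)^{-1}$ and then reduce the coincidence with a standard PageRank to the positivity and unit normalization of the induced personalization vector, which is exactly the content of Theorem~\ref{prop-existencev}. The only difference is cosmetic --- you invoke Theorem~\ref{prop-existencev} as a black box while the paper re-verifies positivity and unit norm of the explicit $\bv^T=\frac{1}{\gamma(1-\alpha)}\bw^T(\I_N-\cA P)^{-1}(\I_N-\alpha P)$ by hand --- and your normalization $s=1/(\bw_\cA^T\be)$ is in fact the value of $1/\gamma$ that makes the paper's own unit-norm check go through.
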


\begin{proof}
	Note that from \eqref{eq:NDRPageRank} one can derive the following equation for the personalization vector $\bm{w}$ \cite{avrachenkov2014personalized}:
	\begin{equation}\label{eq:boldi-dampinig_vector}
		\bm{\pi}_{\rm NPR}(\cA,\bw) = \frac{1}{\gamma} \bw^T(\I_N - \cA P)^{-1},
	\end{equation}
	where
	\begin{equation}\label{eq:gamma_NDPR}
		\gamma = \bw^T(\I_N - \cA)^{-1}\be = \sum_{i=1}^N \frac{\bw_i}{1-\alpha_i}>1.
	\end{equation}
	
	The condition we therefore must check is 
	\begin{equation}
		\bm{\pi}_{\rm NPR}(\cA,\bw) = \bm{\pi}_{\rm STD}(\alpha,\bv) \Rightarrow \frac{1}{\gamma} \bw^T(\I_N - \cA P)^{-1} = (1-\alpha) \bv^T (\I_N - \alpha P)^{-1}. 
	\end{equation}
	
	We can then write $v^T$ explicitly as
	\begin{equation}
		\bv^T= \frac{1}{\gamma(1-\alpha)} \bw^T(\I_N-\cA P)^{-1} (\I_N - \alpha P).
	\end{equation}
	
	We need to prove that this expression is positive and consistent with unit-normalization. Starting with the latter,
	\begin{equation}
		\bv^T \be =\frac{1}{\gamma(1-\alpha)} \bw^T (\I_N-\cA P)^{-1} (1 - \alpha ) \be = \frac{1}{\gamma}\,\bw^T (\I_N-\cA P)^{-1} \be = 1.
	\end{equation}
	
	As for positivity,  we have
	\begin{equation}
		v_i = \bv^T \be_i = \frac{1}{\gamma(1-\alpha)} \bw^T(\I_N-\cA P)^{-1} (\I_N - \alpha P) \be_i > 0,\quad \forall i.
	\end{equation}
	
	Notice that $1-\alpha>0$ and $\gamma \geq 1$, therefore this condition can be written as
	\begin{equation}
		\bw^T(\I_N -\cA P)^{-1} \be_i > \alpha \bw^T(\I_N -\cA P)^{-1} P \be_i,\quad \forall i.
	\end{equation}
	
	Defining $\bw_\cA^T=\bw^T(\I_N-\cA P)^{-1}$, we have that $\bw_\cA^T>0$, and
	\begin{equation}
		\bw_\cA^T \be_i > \alpha \bw_\cA^T P \be_i = \alpha \sum_{j=1}^N (\bw_\cA)^T_j P_{ji},\quad \forall i,
	\end{equation}
	concluding the argument.
\end{proof}

This result shows that, if we can change both $\alpha$ and $\bv$, the standard PageRank can always find that value of the node-dependent restart PageRank. In hindsight, this rather sensible: in the limit $\alpha\rightarrow 0$, $\bv$ dictates the centralities. 

A more nuanced question is then the reverse one: can we conceive conceive a centrality $\bm{\pi}_0\in\R^N$, with $\bm{\pi}_0>0,\,||\bm{\pi}_0||_1=1$ which can't be realized for a given $\alpha$ in the standard PageRank (see Theorem \ref{prop-existencev} for this condition), which nevertheless can be obtained from higher values of the dampings $\alpha_i > \alpha$ in the node-dependent restart case, due to the additional freedom it provides? This is something we will soon discuss.

\paragraph{Localization of node-dependent restart PageRank}

We can derive similar results as those in \cite{garcia2012localization}, discussed in Subsection~\ref{subsec:PRlocalization}, for the node-dependent restart PageRank, regarding the localization of each PageRank scores and subsequently of the possible competitors. 

\begin{lemma}\label{lem:diagonaldom-NPR}
	Let $P$ be a row-stochastic matrix, $\alpha_i\in(0,1),\,\forall i$, $\cA=\textrm{diag}(\alpha_1,...,\alpha_N)$ and $\gamma = [\bw^T(\I_N -\cA)^{-1}\be]^{-1}$ with $\bw\in \mathbb{R}^{N},\, \bw>0,\, |\bw|_1=1$. Then
	\begin{itemize}
		\item $Y=\I_N -\cA P$ is strictly row-diagonally dominant.
		\item $X=\gamma\, Y^{-1}$ is strictly diagonally dominant of its column entries.
		\item The $i^{th}$ column of $X$ attains its maximum value at $x_{ii}$.
	\end{itemize}
\end{lemma}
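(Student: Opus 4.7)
The plan is to dispatch the first bullet by a direct row-sum computation, and then derive the second and third bullets together from a discrete maximum-principle argument applied to the columns of $Y^{-1}$.

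For the first bullet, I would compute directly on row $i$ of $Y = \I_N - \cA P$: the diagonal entry is $1 - \alpha_i p_{ii} > 0$ and the off-diagonals are $-\alpha_i p_{ij}$. Using the row-stochasticity $\sum_j p_{ij} = 1$,
\begin{equation*}
|y_{ii}| - \sum_{j\neq i} |y_{ij}| = (1 - \alpha_i p_{ii}) - \alpha_i(1 - p_{ii}) = 1 - \alpha_i > 0,
\end{equation*}
which yields the claimed strict row-diagonal dominance.

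For the remaining two bullets, let $\bm{y}^{(i)}$ denote the $i$-th column of $Y^{-1}$. Since $\cA P$ is entrywise non-negative with row-sums $\alpha_i < 1$, its spectral radius is strictly less than $1$, so $Y^{-1} = \sum_{k\geq 0}(\cA P)^k$ converges to a non-negative matrix and $\bm{y}^{(i)}$ satisfies the fixed-point equation $\bm{y}^{(i)} = \bm{e}_i + \cA P\,\bm{y}^{(i)}$; componentwise, $y^{(i)}_i \geq 1$, while $y^{(i)}_j = \alpha_j \sum_k p_{jk}\, y^{(i)}_k$ for every $j \neq i$. Now let $y^{*} = \max_k y^{(i)}_k$ and suppose the maximum were attained at some $j^{*} \neq i$. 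Then using $\sum_k p_{j^{*} k} = 1$ together with $y^{(i)}_k \leq y^{*}$, the displayed identity would give $y^{*} \leq \alpha_{j^{*}} y^{*}$; since $\alpha_{j^{*}} < 1$ this forces $y^{*} \leq 0$, and combined with non-negativity one gets $\bm{y}^{(i)} = \mathbf{0}$, contradicting $y^{(i)}_i \geq 1$. Hence the column maximum is attained at the diagonal entry, which is the third bullet. Re-applying the same estimate produces $y^{(i)}_j \leq \alpha_j y^{(i)}_i < y^{(i)}_i$ for every $j \neq i$ (strict because $\alpha_j < 1$ and $y^{(i)}_i \geq 1 > 0$), and multiplying through by the positive scalar $\gamma$ transfers the strict column-wise dominance of the diagonal to the entries of $X = \gamma\, Y^{-1}$, giving the second bullet.

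The step I expect to demand the most care is the maximum-principle argument: the contradiction at $j^{*} \neq i$ rests specifically on the strict bound $\alpha_{j^{*}} < 1$ and on the non-negativity of $\bm{y}^{(i)}$ inherited from the Neumann series, so I would be careful to invoke both hypotheses explicitly. Note that no irreducibility assumption on $P$ is required; the conclusion is driven entirely by $\alpha_i \in (0,1)$ for every $i$, and the role of the normalization $\gamma$ is merely to rescale without affecting the dominance ordering.
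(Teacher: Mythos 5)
Your proof is correct. The first bullet is handled exactly as in the paper, by the direct row-sum computation using row-stochasticity of $P$. For the second and third bullets, however, you take a genuinely different route: the paper proves column-wise dominance of $Y^{-1}$ by citing Theorem 2.5.12 of Horn and Johnson (the inverse of a strictly row-diagonally dominant matrix is strictly diagonally dominant of its column entries), and then separately invokes the fact that $Y$ is a non-singular $M$-matrix to conclude $Y^{-1}\geq 0$ and drop the absolute values. You instead obtain non-negativity from the Neumann series $Y^{-1}=\sum_{k\geq 0}(\cA P)^k$ (valid since $\rho(\cA P)\leq\max_i\alpha_i<1$) and then run a discrete maximum-principle argument on the system $(\I_N-\cA P)\bm{y}^{(i)}=\be_i$, which delivers both the location of the column maximum and the strict inequality in one stroke. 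Both arguments are sound; yours is more self-contained (no external matrix-analysis theorems needed) and in fact yields the sharper quantitative estimate $x_{ji}\leq\alpha_j\,x_{ii}$ for $j\neq i$, while the paper's is shorter by delegating to standard references. Your closing observation that irreducibility of $P$ plays no role and that $\gamma>0$ merely rescales is consistent with the lemma as stated.
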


\begin{proof}
	Firstly, as $P$ is row-stochastic ($P\be=\be$) the sum of each row of $Y$ is $(Y\be)_i=(\be-\cA P\be)_i=(\be-\cA \be)_i=1-\alpha_i$. Therefore, since $\alpha\in(0,1)$ and $0\leq p_{jk} \leq 1$ for all $i,k=1,...,N$ we get
	\begin{equation}
		|y_{ii}| = |1-\alpha_{i} p_{ii}| = 1 - \alpha_i p_{ii} = 1 - \alpha_i + \alpha_i \sum_{k\neq i} p_{ik} > \alpha_i \sum_{k\neq i} p_{ik} = \sum_{k\neq i} |y_{ik}|,
	\end{equation}
	where we used the fact that $P$ is row-stochastic. We have therefore shown that $Y$ is strictly row-diagonally dominant. By Theorem 2.5.12 of \cite{horn1991topics} we know that $Y^{-1}$ and $X$ are strictly diagonally dominant of their column entries. Therefore,
	\begin{equation}
		|x_{ii}| > |x_{ki}| \quad \forall i, k \neq i.
	\end{equation} 
	
	On the other hand, $Y$ is a non-singular $M$-matrix (see above, taking $s=1$, as the spectral radius of $\cA P$ is less than 1), thus $Y^{-1}\geq 0$ \cite{meyer2001matrix}. Hence, the absolute values of the above formula can be deleted and we have
	\begin{equation}
		\max_{k} x_{ki} = x_{ii},
	\end{equation}
	finishing the proof.
\end{proof}

Armed with this lemma, we can tackle the localization problem. 
\begin{definition}
	Given a graph $G=(V,E)$ and fixed node-dependent dampings $\alpha_i\in (0, 1),\, i= 1,...,N$, for each node $i \in V$ we define $\mathcal{P R}(i)$ as the set of all possible values of the node-dependent restart PageRank of node $i$, i.e.
	\begin{equation}
		\mathcal{P R}(i) = \{\bm{\pi}^T(\bw) \be_i\quad \text{for all}\quad \bw\in\mathbb{R}^N, \, \bw>0,\, \, ||\bw||_1=1\}.
	\end{equation}
\end{definition}

The concrete values of $\mathcal{PR}(i)$ can be quantified with the following theorem:

\begin{theorem}
	Under the same conditions and notation as the above definition,
	\begin{equation}
		\mathcal{P R}(i) = (\min_j x_{ji}, x_{ii}).
	\end{equation}
\end{theorem}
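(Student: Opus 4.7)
The plan is to mirror the proof of Theorem~\ref{thm:localizationPageRank} for the standard PageRank: I would express $\bm{\pi}_{\rm NPR}^T \be_i$ as a strictly positive convex combination of the column entries $\{x_{ji}\}_{j=1}^N$ of the matrix $X$ introduced in Lemma~\ref{lem:diagonaldom-NPR}, and then invoke the column-diagonal-dominance property from that lemma to identify $\max_j x_{ji}$ with the diagonal entry $x_{ii}$. Since the combination weights form an $\ell^1$-normalized strictly positive vector, the combination sweeps out an open interval whose upper endpoint is $x_{ii}$ and lower endpoint is $\min_j x_{ji}$.

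Concretely, from \eqref{eq:boldi-dampinig_vector} together with $\gamma = [\bw^T(\I_N - \cA)^{-1}\be]^{-1} = [\sum_j w_j/(1-\alpha_j)]^{-1}$, the $i$-th coordinate of the node-dependent restart PageRank is
\[
\bm{\pi}_{\rm NPR}^T \be_i \;=\; \frac{\sum_j w_j (Y^{-1})_{ji}}{\sum_j w_j/(1-\alpha_j)}.
\]
The substitution $\tilde w_j := w_j/(1-\alpha_j)$ then turns this into
\[
\bm{\pi}_{\rm NPR}^T \be_i \;=\; \sum_j \mu_j \, x_{ji}, \qquad \mu_j := \frac{\tilde w_j}{\sum_k \tilde w_k},
\]
which is a genuine convex combination of the column entries $\{x_{ji}\}_j$. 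Because $1-\alpha_j > 0$ for every $j$, the map $\bw \mapsto \mu$ is a bijection from the open positive simplex $\Delta_N$ onto itself, so $\mu$ ranges over the whole open simplex as $\bw$ does. Elementary convexity then shows that the combination attains every value strictly between $\min_j x_{ji}$ and $\max_j x_{ji}$ and none outside, while the endpoints themselves would require $\mu_j = 1$ for some $j$ and are therefore excluded by positivity. Applying Lemma~\ref{lem:diagonaldom-NPR} to identify $\max_j x_{ji} = x_{ii}$ yields the claimed $\mathcal{PR}(i) = (\min_j x_{ji}, x_{ii})$.

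\textbf{The main obstacle} is handling the $\bw$-dependence of the normalization $\gamma$: before the substitution, the coefficients $w_j/\gamma$ are not probability weights and the convex-combination structure is not visible directly from \eqref{eq:boldi-dampinig_vector}. The reparameterization $\tilde w_j = w_j/(1-\alpha_j)$ is precisely what absorbs $\gamma$, since by definition $\gamma = (\sum_k \tilde w_k)^{-1}$, and verifying that this reparameterization is a bijection of the open simplex onto itself is what guarantees that the full open interval of convex combinations (and hence all of $\mathcal{PR}(i)$) is attained, rather than some proper sub-interval. Once this bookkeeping is in place, the remaining steps are routine applications of convexity and Lemma~\ref{lem:diagonaldom-NPR}.
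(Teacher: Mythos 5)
Your instinct that the $\bw$-dependence of the normalization is the delicate point is correct, but the substitution you use to handle it breaks the link to Lemma~\ref{lem:diagonaldom-NPR}, and with it the identification of the interval's upper endpoint. Writing $\tilde w_j=w_j/(1-\alpha_j)$ does absorb the denominator $\sum_k w_k/(1-\alpha_k)$, but what comes out is
\begin{equation*}
\bm{\pi}_{\rm NPR}^T\be_i=\sum_j\mu_j\,(1-\alpha_j)\,\bigl[(\I_N-\cA P)^{-1}\bigr]_{ji},
\end{equation*}
a convex combination of the entries of the \emph{row-rescaled} matrix $(\I_N-\cA)(\I_N-\cA P)^{-1}$, not of the entries $x_{ji}=\gamma\,[(\I_N-\cA P)^{-1}]_{ji}$ of the matrix $X$ in the lemma: there $\gamma$ is a single scalar multiplying every row alike, whereas your substitution introduces a different factor $1-\alpha_j$ per row. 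Column diagonal dominance survives multiplication by a positive scalar but not row-dependent rescaling, so the closing step ``apply Lemma~\ref{lem:diagonaldom-NPR} to get $\max_j x_{ji}=x_{ii}$'' is unjustified for the matrix your argument actually produces, and it can genuinely fail: for the directed $2$-cycle with $\alpha_1=0.99$, $\alpha_2=0.5$ one has $(\I_N-\cA P)^{-1}=\tfrac{1}{1-\alpha_1\alpha_2}\bigl(\begin{smallmatrix}1&\alpha_1\\ \alpha_2&1\end{smallmatrix}\bigr)$, whence $(1-\alpha_1)[(\I_N-\cA P)^{-1}]_{11}\approx 0.02$ while $(1-\alpha_2)[(\I_N-\cA P)^{-1}]_{21}\approx 0.50$, so the maximum of the first column of your rescaled matrix sits off the diagonal. (A separate issue you should also check: the normalizer forced by $\|\bm{\pi}_{\rm NPR}\|_1=1$ in \eqref{eq:NDRPageRank} is $\bw^T(\I_N-\cA P)^{-1}\be$, which coincides with $\sum_j w_j/(1-\alpha_j)$ only when all dampings are equal, so your opening display inherits this discrepancy from \eqref{eq:gamma_NDPR}.)

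The paper's proof takes a different and more direct route: it keeps the $w_j$ themselves as the convex weights against $X=\gamma Y^{-1}$, bounds $\sum_j w_j x_{ji}$ strictly between $\min_j x_{ji}$ and $\max_j x_{ji}=x_{ii}$ using the lemma, and then realizes every interior value by an explicit two-parameter family $\bw^\epsilon_\lambda$ interpolating between near-vertex personalization vectors and passing to limits in $\epsilon$ and $\lambda$. To salvage your reparameterization route you would need to prove column diagonal dominance for the rescaled matrix that actually appears in your convex combination (equivalently, that $\bm{\pi}_{\rm NPR}^T(\bw)\be_i$ is maximized in the limit $\bw\to\be_i$); this is an additional statement requiring its own argument and is not a consequence of Lemma~\ref{lem:diagonaldom-NPR} as stated.
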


\begin{proof}
	There are two steps involved
	\begin{enumerate}
		\item We first want the upper and lower bounds. Without loss of generality, $i=1$. Then, $\bm{\pi}^T(\bw)\be_1 =\sum_{j=1}^N w_j x_{j1}$. Now, as $\bw>0$ and $||\bw||_1=1$, we have
		\begin{equation}
			\min_j x_{ji} < \sum_{j=1}^N v_j x_{j1} < \max_j x_{j1} = x_{ii},
		\end{equation}
		where the last equality is due to Lemma~\ref{lem:diagonaldom-NPR}.
		
		\item We now want to see that all values within the interval can be found with suitable $\bw$. Again without loss of generality $i=1$. Define the convex combination vector
		\begin{equation}
			\bw^\epsilon_\lambda = \lambda \bw_1^\epsilon + (1-\lambda) \bw_{j_1}^\epsilon > 0, \quad \lambda\in (0,1),
		\end{equation}
		where $j_1$ is the minimum of the first column of $X$ and
		\begin{align}
			\bw_1^\epsilon &= \left(1-\epsilon, \frac{\epsilon}{n-1}, \frac{\epsilon}{n-1}, ..., \frac{\epsilon}{n-1}\right)^T, \nn \\
			\bw_{j_1}^\epsilon &= \bigg(\frac{\epsilon}{n-1}, ..., \underbrace{1-\epsilon}_{j_1}, ..., \frac{\epsilon}{n-1}\bigg)^T. 
		\end{align}
		
		This vector satisfies
		\begin{equation}
			\lim_{\lambda\rightarrow1} \lim_{\epsilon\rightarrow0} \bm{\pi}^T(\bw_{\lambda}^\epsilon)\be_1 = x_{11},\quad \lim_{\lambda\rightarrow0} \lim_{\epsilon\rightarrow0} \bm{\pi}^T(\bw_{\lambda}^\epsilon)\be_1 = x_{j_1 1}.
		\end{equation}
		Hence, for every $x$ with $x_{j_1 1} < x < x_{11}$ there exists some $\epsilon_0,\lambda_0 \in (0,1)$ such that $\bm{\pi}^T(\bw_{\lambda_0}^{\epsilon_0})\be_1 = x$.
	\end{enumerate}
\end{proof}

It is clear from this result that the localization of the node-dependent restart PageRank is analogous to the standard one, the difference is in the actual value of the matrix $X$ used to compute that range. 

This, in turn, could be used to analyze the competitors and leaders in any network, as defined in \cite{garcia2012localization}, since they only require knowledge of the $\mathcal{PR}(i)$ ranges in order to find them.

\paragraph{Ranking control of node-dependent restart PageRank}

Here we will pick up the discussion of the possible centralities realized with the node-dependent restart PageRank, as compared to the standard PageRank. We are interested in examining if there is a trade-off between having several, node-dependent dampings (which provides more freedom in the measure's parameters) and the actual values of these dampings.

For instance, one could wonder if lowering substantially the values of certain dampings could compensate for higher ones (which restrict the random walk to follow the network structure more closely), allowing for an enhance in controllability. However, as we will see, this is not the case: in fact, the node-dependent restart PageRank is  constrained by the standard one.

We will follow closely the narrative from Subsection~\ref{subsec:paramPRcontrol}. Starting from \eqref{eq:NDRPageRank} we can derive the following formula for the personalization vector $\bw\in\R^N$ as a function of the centrality $\bm{\pi}\in\R^N$
\begin{equation}
	\bw^T=\frac{1}{\bm{\pi}^T(\I_N-\cA )\be} \bm{\pi}^T(\I_N - \cA P).
\end{equation}

We can use this formula to derive the following bound, akin to Theorem \ref{prop-existencev}.
\begin{theorem}[Existence of the personalization vector]\label{thm:prop-existencev-ndpr}
	Given a graph $G=(V,E)$ and a positive, unit norm vector $\bm{\pi}_0$, then there exists a positive, unit norm personalization vector $\bw\in\R^N$ such that the node-dependent restart PageRank is $\pi_{\rm NPR}=\pi_0$ if and only if $\bm{\pi}^T_0 \be_j > (\max_i{\alpha_i}) \bm{\pi}^T_0 P \be_j$ for all $j$.
\end{theorem}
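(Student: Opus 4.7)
The plan is to follow the blueprint of Theorem~\ref{prop-existencev}, adapting each step to accommodate the diagonal damping matrix $\cA$ instead of a scalar $\alpha$. Starting from the defining identity~\eqref{eq:NDRPageRank} with $\bm{\pi}_{\rm NPR}=\bm{\pi}_0$, I would isolate the rank-one term by moving $\bm{\pi}_0^T \cA P$ to the other side. The scalar $\kappa := \bm{\pi}_0^T (\I_N-\cA)\be$ is strictly positive (since $\bm{\pi}_0>0$ and every $\alpha_i\in(0,1)$), so one can solve explicitly
\begin{equation}
\bw^T \;=\; \frac{1}{\kappa}\,\bm{\pi}_0^T (\I_N - \cA P).
\end{equation}
Existence of an admissible personalization vector then reduces to verifying that this right-hand side is componentwise positive and sums to one.

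The unit-norm condition is essentially automatic: since $P$ is row-stochastic one has $\cA P \be = \cA \be$, so $\bm{\pi}_0^T (\I_N - \cA P)\be = \bm{\pi}_0^T(\I_N-\cA)\be = \kappa$, which cancels the prefactor and gives $\bw^T\be=1$. This leaves positivity as the only nontrivial requirement. Writing it componentwise, $\bw>0$ is equivalent to
\begin{equation}
\bm{\pi}_0^T \be_j \;>\; \bm{\pi}_0^T \cA P\, \be_j \;=\; \sum_i \pi_{0,i}\,\alpha_i\, P_{ij} \qquad \forall j.
\end{equation}

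Sufficiency of the stated hypothesis then falls out from the coarse estimate $\sum_i \pi_{0,i}\,\alpha_i\, P_{ij} \le (\max_i \alpha_i)\,\bm{\pi}_0^T P \be_j$, which uses only that $\cA$ is diagonal with entries in $(0,1)$ and that $\bm{\pi}_0$ and $P$ are entrywise non-negative. For the converse direction I would argue that the $\max_i\alpha_i$ bound is the natural threshold to state in a form analogous to Theorem~\ref{prop-existencev}: in the extremal regime $\alpha_i\equiv\max_i\alpha_i$ the estimate becomes an equality and the problem collapses to the standard PageRank setting, so necessity transfers directly from that theorem.

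The main obstacle I foresee is precisely this converse, since the \emph{exact} positivity criterion $\pi_{0,j} > (\bm{\pi}_0^T \cA P)_j$ is in general strictly weaker than the stated condition involving $\max_i\alpha_i$; making the ``only if'' watertight requires reading the hypothesis as the universal threshold across all admissible $\cA$ sharing the same $\max_i\alpha_i$, or equivalently reducing to the uniform-damping worst case. Once that interpretation is fixed, the rest of the argument is a routine transcription of the standard PageRank proof with $\alpha P$ replaced by $\cA P$ and the scalar normalization $(1-\alpha)$ replaced by $\kappa$.
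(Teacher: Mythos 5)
Your forward direction is exactly the paper's argument: solve \eqref{eq:NDRPageRank} for $\bw^T=\kappa^{-1}\bm{\pi}_0^T(\I_N-\cA P)$ with $\kappa=\bm{\pi}_0^T(\I_N-\cA)\be>0$, note that row-stochasticity of $P$ makes the normalization automatic, and reduce existence to componentwise positivity, which follows from the coarse bound $\sum_i \pi_{0,i}\alpha_i P_{ij}\le(\max_k\alpha_k)\,\bm{\pi}_0^T P\be_j$. Since the map $\bw\mapsto\bm{\pi}_{\rm NPR}$ is invertible up to the (uniquely fixed) normalization, the \emph{exact} criterion for existence of a positive $\bw$ is $\bm{\pi}_0^T\be_j>\bm{\pi}_0^T\cA P\,\be_j$ for all $j$, and up to this point your proposal and the paper coincide.

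Your worry about the converse is well founded, and you should not try to argue it away: the ``only if'' clause is genuinely false for non-uniform $\cA$. Take $N=2$, $P=\left(\begin{smallmatrix}0&1\\1&0\end{smallmatrix}\right)$, $\alpha_1=0.1$, $\alpha_2=0.9$, $\bm{\pi}_0=(0.6,0.4)^T$: then $\bm{\pi}_0^T\cA P\be_1=0.36<0.6$ and $\bm{\pi}_0^T\cA P\be_2=0.06<0.4$, so a positive unit-norm $\bw$ realizing $\bm{\pi}_0$ exists, yet $(\max_k\alpha_k)\bm{\pi}_0^TP\be_2=0.54>0.4$ violates the stated condition. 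The paper's own proof has the same limitation --- its displayed chain of inequalities only shows that the stated $\max_k\alpha_k$ condition \emph{implies} $\bw_j>0$, never the reverse --- so the theorem as written is only a sufficiency result. Your proposed repair (reading the condition as a uniform threshold over all $\cA$ sharing the same maximum, i.e.\ reducing to the worst case $\alpha_i\equiv\max_k\alpha_k$) would make the necessity true, but it changes the quantifiers of the statement and is not what is proved; the clean alternatives are to state the iff with the condition $\bm{\pi}_0^T\be_j>\bm{\pi}_0^T\cA P\,\be_j$ for all $j$, or to weaken ``if and only if'' to ``if.''
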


\begin{proof}
	We need to show under which conditions does $\bw$ have unit norm and positivity, for it to be a personalization vector. 
	
	First we check the unit-norm, 
	\begin{equation}
		||\bw||_1=\bw^T\be = \frac{1}{\bm{\pi}_0^T(\I_N-\cA )\be} \bm{\pi}_0^T(\I_N - \cA P)\be = ||\bm{\pi}_0||_1=1.
	\end{equation}
	where we used the row-stochasticity of $P$, i.e. $P\be=\be$. As for positivity,
	\begin{align}
		\bw_j = \bw \be_j &= \frac{1}{\bm{\pi}_0^T(\I_N-\cA )\be} \bm{\pi}_0^T(\I_N - \cA P)\be_j \nn \\ &>\frac{1}{\bm{\pi}_0^T(\I_N-\cA )\be} \bm{\pi}_0^T(\I_N - (\max_i{\alpha}_i) P)\be_j > 0,
	\end{align}
	which completes the proof.
\end{proof}

For the ranking control, things are slightly more inconvenient in this case, compared to the standard PageRank. In the closed-form formula for the PageRank vector the normalization is independent of the personalization vector $\bv$ used, it is just $1-\alpha$. Here, however, the normalization \eqref{eq:gamma_NDPR} in \eqref{eq:boldi-dampinig_vector} depends on the choice of personalization vector $\bw$ under consideration. It is therefore interesting for us to leave the normalization behind, and consider the following map:
\begin{align}
	\widetilde{\bm{\pi}}(\cA,\cdot) =&\quad \triangle_N \longrightarrow CH_N \nn \\
	&\quad\,\,\,\, \bw \lhook\joinrel\longrightarrow \bw^T(\I_N -\cA P)^{-1},
\end{align}
where $CH_N$ stands for the positive convex hull in $N$ dimensions 
\begin{equation}
	CH_N = \left\{\bx\in \mathbb{R}^N \;\bigg|\; \bx = \sum_{i=1}^N{a_i \be_i}, \, a_i>0  \right\}.
\end{equation}

This map is linear in $\bw$, therefore it maps the $N$-dimensional ``normalized'' simplex $\triangle_N$ into an $N$-dimensional simplex (possibly un-normalized) in the convex hull. In this new setting, the generalization of Theorem \ref{thm:ranking} requires the existence of a personalization vector such that $\widetilde{\bm{\pi}}_0$ is in the parameterized line centered in the convex hull.

\begin{theorem}
	Given a graph $G$ and a set of damping factors $\alpha_i=(0,1),\, i\in V$, then it is possible to obtain any ranking of the nodes under the node-dependent restart PageRank if and only if
	\begin{equation}
		\frac{1}{\max_k \alpha_k} >  \max_j \left(\sum_i^N P_{ij} \right),
	\end{equation}
\end{theorem}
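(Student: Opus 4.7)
The plan is to mirror the geometric argument from Subsection~\ref{subsec:paramPRcontrol} that proved Theorem~\ref{thm:ranking}, but now invoking the node-dependent analogue Theorem~\ref{thm:prop-existencev-ndpr} in place of Theorem~\ref{prop-existencev}. Concretely, ranking control is equivalent to the image of the parametrization $\bw \mapsto \bm{\pi}_{\rm NPR}(\cA,\bw) \in \triangle_N$ containing the centroid $\bm{e}_0 = \bm{e}/N$ in its interior, because every possible ranking on $V$ is attained in an arbitrarily small neighborhood of $\bm{e}_0$ inside $\triangle_N$. So the task reduces to characterizing when $\bm{e}_0$ belongs to the interior of that image.

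For the sufficiency direction, I would apply Theorem~\ref{thm:prop-existencev-ndpr} to the candidate target vector $\bm{\pi}_0 = \bm{e}_0$. The condition $\bm{\pi}_0^T \be_j > (\max_i \alpha_i)\, \bm{\pi}_0^T P \be_j$ for all $j$ becomes, after multiplying by $N$ and using $\bm{e}^T P \be_j = \sum_i P_{ij}$,
\begin{equation*}
1 > (\max_i \alpha_i) \sum_i P_{ij} \quad \forall j \iff \frac{1}{\max_k \alpha_k} > \max_j \sum_i P_{ij}.
\end{equation*}
If this strict inequality holds at $\bm{e}_0$, then by continuity the same strict inequalities hold throughout an open ball $U \subset \triangle_N$ around $\bm{e}_0$, so Theorem~\ref{thm:prop-existencev-ndpr} produces a positive, unit-norm personalization vector $\bw$ for every target in $U$. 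Since every ranking of the nodes is realized by some $\bm{\pi}_0 \in U$ (the centroid is adjacent to every chamber of the ranking partition of $\triangle_N$), every ranking is attainable.

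For necessity, if $\frac{1}{\max_k \alpha_k} \le \max_j \sum_i P_{ij}$, then by the same computation the inequality of Theorem~\ref{thm:prop-existencev-ndpr} fails at $\bm{e}_0$ for at least one index $j^\star$. This failure persists on one entire closed half-space of $\triangle_N$ near $\bm{e}_0$ (the one in which $\bm{\pi}_0^T \be_{j^\star} - (\max_i \alpha_i) \bm{\pi}_0^T P \be_{j^\star}$ stays non-positive), so targets in that half-space, including rankings where node $j^\star$ sits appropriately close to the centroid value, cannot be realized by any valid $\bw$. Hence ranking control fails.

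The main technical subtlety I expect is justifying that perturbations around $\bm{e}_0$ really do sweep out every ranking chamber, and that the boundary of the achievable set passes through $\bm{e}_0$ precisely when the inequality becomes an equality; this is cleanest by noting the map $\bw \mapsto \bm{\pi}_{\rm NPR}$ is a rational function that is linear in $\bw$ up to the scalar normalization $1/\gamma(\bw)$ from \eqref{eq:gamma_NDPR}, so the image is a smooth $(N{-}1)$-submanifold of $\triangle_N$ whose boundary is carved out exactly by the collection of inequalities in Theorem~\ref{thm:prop-existencev-ndpr}. Once this is in place, the equivalence at $\bm{e}_0$ yields the stated bound immediately.
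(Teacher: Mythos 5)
Your proposal is correct and follows essentially the same route as the paper: reduce ranking control to the centroid $\bm{e}/N$ (equivalently, the line $t\bm{e}$) lying in the interior of the image of the personalization-vector map, then apply Theorem~\ref{thm:prop-existencev-ndpr} at that target to extract the bound $1/\max_k\alpha_k > \max_j\sum_i P_{ij}$. You spell out the continuity/neighborhood argument and the necessity direction more explicitly than the paper does, but the underlying idea is identical.
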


\begin{proof}
	Firstly, the relation between $\widetilde{\bm{\pi}}$ and $\bm{\pi}_{\rm NDR}$ is simply a proportionality factor, which therefore can't change the relative ranking between the components. 
	
	Second, the condition for the existence of ranking control is for the line parameterized by $\tilde{\be}=(t,t,\dots,t)=t\, \be \in\R^N$ with $t>0$ to pass through the image of $\widetilde{\bm{\pi}}$. 
	
	Imposing this in Theorem \ref{thm:prop-existencev-ndpr} we easily arrive to the set of restrictions $1 > \be^T \cA P \be_j$ for all $j$. We can restrict it further taking maximums as 
	\begin{equation}
		1 > \be^T \cA P \be_j \geq (\max_k \alpha_k)  \left(\sum_i^N P_{ij}\right), \,\forall j\quad \Rightarrow\quad  1 \geq (\max_k \alpha_k)  \max_j\left(\sum_i^N P_{ij}\right),
	\end{equation}
	thus concluding the proof.
\end{proof}

The punchline of this calculation is that, actually, the freedom in having a collection of damping factors rather than a global one does not guarantee an improvement in terms of ranking control with respect to the standard PageRank: as a matter of fact, the standard PageRank with the damping factor equal to highest one from the collection (hence, the most ``restrictive'' one) has the same ranking control inequality.

\subsubsection{PageRank with a personalization matrix} \label{subsec:persmatrix}

We now move on to discuss the controllability of the personalization matrix PageRank, defined in \eqref{eq:PMPageRank}. One would naïvely expect that the generalization, with its corresponding increase in the degrees of freedom, would imply for the measure an increase in the possibilities for control and/or localization of the centrality scores. This is, however, not the case.

The following result will be key to understanding the relation between the standard and matrix personalization PageRank.
\begin{theorem}\label{thm:matrixPR}
	Let $G=(V,E)$ be a graph with $N$ nodes, let $\alpha\in(0,1)$ and $M\in\R^{N\times N}$ a row-normalized matrix. Let $\bm{\pi}_{\rm PR}(G,\alpha,\bv)$ be the standard PageRank and let $\widetilde{\bm{\pi}}$ be the result of the matrix personalized PageRank with damping factor $\alpha$ and personalization matrix $M$. We have
	\begin{equation}
		\widetilde{\bm{\pi}} = \bm{\pi}_{\rm PR}(G, \alpha, \widetilde{\bm{\pi}}^T M).
	\end{equation}
\end{theorem}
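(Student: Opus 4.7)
The plan is to show that the vector $\widetilde{\bm{\pi}}^T M$ plays exactly the role of a personalization vector in the standard PageRank equation, and then invoke uniqueness. First I would rewrite the defining equation \eqref{eq:PMPageRank} of the matrix personalization PageRank in the split form
\begin{equation*}
    \widetilde{\bm{\pi}}^T = \alpha\, \widetilde{\bm{\pi}}^T P + (1-\alpha)\,\widetilde{\bm{\pi}}^T M,
\end{equation*}
and define the candidate personalization vector $\bv^T := \widetilde{\bm{\pi}}^T M$. The point of the proof is that the right-hand side above then reads $\alpha\, \widetilde{\bm{\pi}}^T P + (1-\alpha)\,\bv^T$, which is precisely the fixed-point equation satisfied by $\bm{\pi}_{\rm PR}(G,\alpha,\bv)$ after using $\widetilde{\bm{\pi}}^T \be = 1$ to rewrite $(1-\alpha)\bv^T$ as $(1-\alpha)\,\widetilde{\bm{\pi}}^T \be\cdot \bv^T$, so that the full Google matrix $\alpha P + (1-\alpha)\be\cdot\bv^T$ appears.

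Before concluding, I need to check that $\bv$ is an admissible personalization vector, i.e.\ that it lies in $\Delta_N$. Unit $\ell_1$-norm follows immediately from the row-stochasticity of $M$:
\begin{equation*}
    \bv^T \be = \widetilde{\bm{\pi}}^T M \be = \widetilde{\bm{\pi}}^T \be = 1.
\end{equation*}
Positivity of $\bv$ follows from $\widetilde{\bm{\pi}}>0$ together with the assumption that $M$ has strictly positive column sums when weighted by $\widetilde{\bm{\pi}}$; in fact the very existence of the positive MPR vector $\widetilde{\bm{\pi}}$ required that $\alpha P + (1-\alpha)M$ be irreducible, which in turn prevents any column of $M$ from vanishing identically, so $\bv^T \be_j = \widetilde{\bm{\pi}}^T M \be_j > 0$ for every $j$.

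Having verified $\bv\in\Delta_N$, the equation $\widetilde{\bm{\pi}}^T[\alpha P + (1-\alpha)\be\cdot\bv^T] = \widetilde{\bm{\pi}}^T$ holds, so $\widetilde{\bm{\pi}}$ is a positive, unit-norm left eigenvector of the Google matrix $G(\alpha,\bv)$ associated with the eigenvalue $1$. By the Perron--Frobenius Theorem~\ref{thm:Perron-Frobenius}, such an eigenvector is unique, hence $\widetilde{\bm{\pi}} = \bm{\pi}_{\rm PR}(G,\alpha,\widetilde{\bm{\pi}}^T M)$, which is the desired identity.

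The only delicate point I foresee is the positivity of $\bv = (\widetilde{\bm{\pi}}^T M)^T$: unlike the unit-norm condition, which is automatic, this relies on a mild structural hypothesis on $M$ (no all-zero columns) that is implicitly built into the well-posedness of the matrix personalization PageRank itself. Everything else is essentially a one-line algebraic rearrangement combined with the uniqueness clause of Perron--Frobenius.
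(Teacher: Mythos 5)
Your proof is correct and follows essentially the same route as the paper: split the MPR fixed-point equation, set $\bv^T=\widetilde{\bm{\pi}}^T M$, and use $\widetilde{\bm{\pi}}^T\be=1$ to reassemble the Google matrix $\alpha P+(1-\alpha)\be\cdot\bv^T$. The admissibility check you add goes beyond what the paper records; the only shaky point is the positivity justification (irreducibility of $\alpha P+(1-\alpha)M$ does not actually forbid an identically zero column of $M$), but this does not affect the algebraic identity being proved.
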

\begin{proof}
	Starting with the definition of the matrix personalized PageRank, we have
	\begin{equation}
		\widetilde{\bm{\pi}}^T (\alpha P + (1-\alpha) M) = \widetilde{\bm{\pi}}^T \Rightarrow \widetilde{\bm{\pi}}^T \alpha P + (1-\alpha) \widetilde{\bm{\pi}}^T M = \widetilde{\bm{\pi}}^T.
	\end{equation}
	
	The identification $\bv^T = \widetilde{\bm{\pi}}^T M$, knowing that $\widetilde{\bm{\pi}}^T \be =1$, results in
	\begin{equation}
		\widetilde{\bm{\pi}}^T (\alpha P + (1-\alpha) \be \cdot \bv^T) = \widetilde{\bm{\pi}}^T,
	\end{equation}
	which is the definition of the standard PageRank $\bm{\pi}_{\rm PR}$ itself.
\end{proof}

The implication of the above theorem is rather extraordinary: for a fixed $\alpha$, even when we enable the possibility of specifying individual personalization vectors per node, at the end of the day the final PageRank outcome could have been produced with a suitably chosen personalization vector, identical for all nodes. This might hint at a reason why the matrix personalization case has not been considered in the literature thus far.

Note that this does not mean that the matrix personalized PageRank is useless: while it would be possible to interpret the personalization matrix (each row having the distribution of teleportation probabilities corresponding to that node), a priori there is no interpretation for the personalization vector matching the PageRank outcome. However, what is formally shown here is the fact that we can obtain any personalization matrix PageRank outcome, without an actual personalization matrix, but just some personalization vector.

It is important to remark that we also have the converse implication: any standard PageRank outcome $\widetilde{\bm{\pi}}$ can be realized via the matrix personalization PageRank (simply considering $M=\be\cdot \bv^T$).

The following two corollaries follow directly from the previous result.
\begin{corollary}
Let $G=(V,E)$ be a graph, $\alpha\in(0,1)$ a fixed damping factor. Let $\mathcal{PR}(i), \mathcal{MPR}(i)$ the set of all possible centrality values of node $i\in V$ according to the standard and matrix personalization PageRank, respectively. Then, $\mathcal{MPR}(i) = \mathcal{PR}(i),\forall i\in V$.	
\end{corollary}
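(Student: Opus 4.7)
The plan is to prove the two inclusions $\mathcal{MPR}(i)\subseteq \mathcal{PR}(i)$ and $\mathcal{PR}(i)\subseteq \mathcal{MPR}(i)$ separately, using Theorem~\ref{thm:matrixPR} as the main engine for the first inclusion and an explicit rank-one construction of $M$ for the second.

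For the forward inclusion, I would start by taking any value $s\in\mathcal{MPR}(i)$, i.e.\ a row-stochastic personalization matrix $M$ producing a matrix-personalized PageRank $\widetilde{\bm{\pi}}$ with $\widetilde{\bm{\pi}}^T\be_i = s$. By Theorem~\ref{thm:matrixPR}, $\widetilde{\bm{\pi}} = \bm{\pi}_{\rm PR}(G,\alpha,\bv)$ where $\bv^T = \widetilde{\bm{\pi}}^T M$. Next I would verify that $\bv$ is a legitimate personalization vector. Unit norm follows immediately from the row-stochasticity of $M$, since $\bv^T\be = \widetilde{\bm{\pi}}^T M \be = \widetilde{\bm{\pi}}^T \be = 1$. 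Positivity follows from $\widetilde{\bm{\pi}}>0$ (guaranteed by Perron--Frobenius applied to the irreducible stochastic matrix $\alpha P + (1-\alpha)M$) together with the strict positivity of $M$, so that $\bv_j = \sum_k M_{kj}\,\widetilde{\bm{\pi}}_k > 0$ for every $j$. Hence $s=\pi_i\in\mathcal{PR}(i)$.

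For the reverse inclusion, I would take any $s\in\mathcal{PR}(i)$ realized by a personalization vector $\bv>0$ with $\|\bv\|_1 = 1$, and simply define $M := \be\cdot\bv^T\in\R^{N\times N}$. This matrix is row-stochastic (each row is $\bv^T$, whose entries sum to $1$) and strictly positive, and substituting it into equation \eqref{eq:PMPageRank} gives back \eqref{eq:PageRank} verbatim, so $\widetilde{\bm{\pi}} = \bm{\pi}_{\rm PR}(G,\alpha,\bv)$ and in particular $\widetilde{\pi}_i = s$. This proves the reverse containment.

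The step I expect to demand the most care is the positivity check for $\bv = M^T\widetilde{\bm{\pi}}$ in the forward inclusion, since it depends on the admissible class of $M$: if one allows row-stochastic $M$ with zero columns, then $\bv$ could sit on the boundary of the simplex and one would only get $s\in\overline{\mathcal{PR}(i)}$, requiring a continuity/density argument to close the gap between the open intervals. Under the natural irreducibility hypothesis (equivalently $M>0$), however, this subtlety disappears and the two inclusions match exactly, yielding $\mathcal{MPR}(i)=\mathcal{PR}(i)$ as claimed. The remaining steps are algebraic manipulations of row-stochasticity and Theorem~\ref{thm:matrixPR}, which are essentially immediate.
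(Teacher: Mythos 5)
Your proof is correct and follows essentially the same route as the paper, which treats this corollary as immediate from Theorem~\ref{thm:matrixPR} (for the inclusion $\mathcal{MPR}(i)\subseteq\mathcal{PR}(i)$) together with the rank-one choice $M=\be\cdot\bv^T$ (for the converse). Your explicit verification that $\bv^T=\widetilde{\bm{\pi}}^T M$ is a legitimate personalization vector, and in particular your caveat that strict positivity of $\bv$ requires $M$ to have no zero columns (otherwise one only lands in the closure of the open interval $\mathcal{PR}(i)$), is a worthwhile sharpening that the paper glosses over.
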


\begin{corollary}
	Let $G=(V,E)$ be a graph, $\alpha\in(0,1)$ a fixed damping factor. Then, there is full or ranking control in the matrix personalization PageRank if there is full or ranking control in the standard PageRank, respectively. 
\end{corollary}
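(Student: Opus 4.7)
The plan is to leverage the converse direction already noted in the remark following Theorem~\ref{thm:matrixPR}: namely, that every standard PageRank outcome is realizable as a matrix personalization PageRank outcome by choosing the rank-one matrix $M = \be \cdot \bv^T$. This will yield an inclusion between the images of the two parameterization maps (for fixed $\alpha$), from which both statements follow immediately.

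First I would verify that $M = \be \cdot \bv^T$ is a valid row-stochastic personalization matrix whenever $\bv\in\Delta_N$. Each row of $M$ equals $\bv^T$, which is a probability vector by assumption, so $M$ is row-stochastic. Substituting this $M$ into \eqref{eq:PMPageRank} gives
\begin{equation}
\widetilde{\bm{\pi}}^T \left[\alpha P + (1-\alpha)\, \be \cdot \bv^T\right] = \widetilde{\bm{\pi}}^T,
\end{equation}
which is exactly the defining equation of the standard PageRank with personalization vector $\bv$. By the uniqueness guaranteed by the Perron--Frobenius Theorem~\ref{thm:Perron-Frobenius}, this forces $\widetilde{\bm{\pi}} = \bm{\pi}_{\rm PR}(G,\alpha,\bv)$. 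Consequently, the image of the standard PageRank map $\bv \mapsto \bm{\pi}_{\rm PR}(G,\alpha,\bv)$ is contained in the image of the matrix personalization PageRank map $M \mapsto \widetilde{\bm{\pi}}(G,\alpha,M)$.

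From this containment both conclusions follow formally. For full control, if every target vector $\bm{\pi}_0\in\Delta_N$ arises as $\bm{\pi}_{\rm PR}(G,\alpha,\bv)$ for some $\bv$, then $\bm{\pi}_0$ is likewise the matrix personalization PageRank for $M = \be\cdot\bv^T$, so matrix personalization PageRank realizes every target as well. For ranking control, since a ranking depends only on the relative ordering of the components of the centrality vector, the same realizing vectors inherited from the standard case produce every ranking in the matrix personalization image.

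I do not anticipate a genuine obstacle: the heavy lifting has been done in Theorem~\ref{thm:matrixPR}, and the corollary is essentially a bookkeeping consequence once the image inclusion is recorded. The only subtlety worth flagging is that the inclusion actually goes both ways --- Theorem~\ref{thm:matrixPR} itself shows every matrix personalization outcome is a standard PageRank outcome --- so the two notions of control coincide, although only the stated direction is claimed by the corollary.
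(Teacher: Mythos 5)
Your proposal is correct and follows essentially the same route as the paper: the corollary is obtained from the remark after Theorem~\ref{thm:matrixPR} that any standard PageRank outcome is realized by the matrix personalization PageRank with $M=\be\cdot\bv^T$, giving the image inclusion from which full and ranking control transfer immediately. Your write-up merely makes explicit the row-stochasticity of $\be\cdot\bv^T$ and the Perron--Frobenius uniqueness step, which the paper leaves implicit.
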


These results are clear from the fact that all PageRank outcomes can be realized as matrix personalization PageRank outcomes, and viceversa, and in turn imply that the discussion on competitors and leaders under this centrality measure is equivalent to the one in \cite{garcia2012localization}, since the ranges are identical.

\section{Conclusions}\label{sec:conclusions}

Our comprehensive investigation into the controllability of PageRank-based centrality measures has revealed fundamental insights into the complex relationship between network structure, stochastic processes, and node importance rankings. Through detailed analysis of both structural and parametric control mechanisms, we have established clear boundaries for what can be achieved in terms of manipulating these measures, while preserving their essential character as representations of random walk processes.

Our analysis demonstrates that complete control - the ability to achieve arbitrary centrality scores - is only possible within specific parameter regions and network configurations that maintain the underlying Markov process properties. This is true for both the standard PageRank as well as the node-dependent generalizations. The weaker notion of ranking control \cite{arrigo2020beyond, contrerasaso2023pagerank}, while more achievable, still faces fundamental limitations rooted in the stochastic nature of the measures.

Our characterization of competitors - nodes whose relative rankings can be adjusted through parameter variations - and leaders - nodes capable of achieving maximum centrality within certain parameter regions - provides a nuanced understanding of how modifications to the underlying random walk can affect node importance hierarchies. These findings reveal the delicate balance between network structure and parameter choices in determining centrality outcomes, highlighting how the stochastic foundation of these measures interacts with attempts to control their results.

The established bounds on parameter regions for both complete and ranking control offer practical guidance for network analysis and design. These bounds are particularly relevant in applications where understanding the limitations of centrality manipulation is crucial, such as in the design of robust rankings for search engines or the analysis of influence spreading in social networks \cite{gleich2015pagerank}. Future research directions could explore how these controllability results extend to other spectral centrality measures and more complex stochastic processes on networks.

\section*{Acknowledgments}

The authors would like to thank Julio Flores and Esther García for fruitful conversations on the matrix personalized PageRank. G. C-A. is funded by the URJC fellowship PREDOC-21-026-2164. This work has been partially supported by project M3707 (URJC Grant) and the INCIBE/URJC Agreement M3386/2024/0031/001 within the framework of
the Recovery, Transformation and Resilience Plan funds of the European Union (Next Generation EU).


%
%
%

	\bibliography{centrality, control, datasets, standard, textbooks, hypergraphs, extra}

\end{document}